\newtheorem{theorem}{Theorem}
\newtheorem{lemma}{Lemma}
\newtheorem{claim}{Claim}
\newtheorem{fact}{Fact}
\newtheorem{proposition}{Proposition}
\newtheoremstyle{redstyle}
     {3pt}
     {3pt}
     {\color{black}}
     {}
     {\color{red}\bfseries}
     {:}
     {.5em}
     {}
\theoremstyle{redstyle}
\newcommand{\comment}[1]{}
\newcommand{\labell}[1]{\label{#1}} 
\newcommand{\todo}[1]{\noindent\colorbox{red}{todo: #1}}
\newcommand{\cc}[1]{{\color{cyan}{#1}}}
\newcommand{\labell}[1]{\label{#1}}
\newcommand{\todo}[1]{}
\newcommand{\cc}[1]{}
\newcommand{\NAT}{{\mathbb N}}
\newcommand{\dist}{\text{dist}}
\newcommand{\cT}{{\mathcal T}}
\newcommand{\cN}{{\mathcal N}}
\newcommand{\cI}{{\mathcal I}}
\newcommand{\eps}{\varepsilon}
\newfont{\mycrnotice}{ptmr8t at 7pt}
\newfont{\myconfname}{ptmri8t at 7pt}
\begin{document}

\title{On the Impact of Geometry on Ad Hoc Communication \\ in Wireless Networks
\thanks{%
	This work was supported by the Polish National Science Centre grant
DEC-2012/07/B/ST6/01534.
}
}

\author{Tomasz Jurdzinski$^1$ \and Dariusz R.~Kowalski$^2$ \and Michal Rozanski$^1$ \and Grzegorz Stachowiak$^1$}

\footnotetext[1]{Institute of Computer Science, University of Wroc{\l}aw, Poland.}

\footnotetext[2]{Department of Computer Science,
            University of Liverpool,
            Liverpool L69 3BX, UK.
            }

\date{}

\maketitle

\begin{abstract}
In this work we address the question how important is the knowledge of geometric location and network density to the efficiency of (distributed) wireless communication in ad hoc networks. We study fundamental communication task of broadcast and develop well-scalable, randomized algorithms that do not rely on GPS information, and which efficiency formulas do not depend on how dense the geometric network is. We consider two settings: with and without spontaneous wake-up of nodes. In the former setting, in which all nodes start the protocol at the same time, our algorithm accomplishes broadcast in $O(D\log n + \log^2 n)$ rounds under the SINR model, with high probability (whp), where $D$ is the diameter of the communication graph and $n$ is the number of stations.
In the latter setting, in which only the source node containing the original message is active in the beginning, we develop a slightly slower algorithm working in $O(D\log^2 n)$ rounds whp. Both algorithms are based on a novel distributed coloring method, which is of independent interest and potential applicability to other communication tasks under the SINR wireless model.
\vspace*{1ex}

\end{abstract}

\section{Introduction}

In this paper we study distributed communication problems in wireless networks,
where interferences are resolved by the Signal-to-Interference-and-Noise Ratio (SINR)
physical model.
Specifically, but not exclusively, we concentrate on the broadcast problem,
where a piece of information stored in a specified station/node (the source) is supposed
to be delivered
to all other stations in the network.
The broadcast is a
fundamental communication primitive, whose complexity is well understood
in the previous models of wireless communication, such as
radio networks.
Closer to reality, models based on the SINR constraint
attracted attention of algorithmic community much later
than the radio network model. One of the key differences between models is that
radio networks take into account only interference between stations in close neighborhood,
while the SINR model relies on the physical assumptions that the strength of
signals decrease gradually according to a continuous function and cumulate, which makes
development of algorithms and their analysis much more complicated.
\comment{
In order to address this challenge, the authors studying the broadcast problem
in the SINR model often assume various helpful capabilities.
For instance,
they use a feature of tunable collision
detection, assume that all stations start a protocol simultaneously (which allows for a preprocessing),
power control allowing stations to decide
the strength of a transmitted signal in each step 
assumption that all stations know their positions in the Euclidean space, or
knowledge of granularity (maximum ratio between
between actual distances of stations which can communicate directly???) of a network.
Usually, the knowledge of parameters of the SINR model is also assumed.

The goal of this paper is to develop algorithms applicable in networks
of severely limited devices (sensors).
In particular, we assume that most of the features listed above are not
available.
In several applications, sensors communicating by wireless medium are
severely limited which naturally motivates such approach.
More generally, if efficient algorithms in such a model are possible, this
shows which features are \textbf{not} necessary for efficient communication.
}

In this work we show that efficiency of wireless communication depends
mainly
on
parameters of the communication graph, even for devices with limited knowledge
and capabilities. In particular, we do not assume any carrier sensing capabilities,
initial synchronization
or any knowledge other than rough estimates of the number of nodes and physical SINR parameters.
Despite of that, we develop almost optimal and well scalable solutions to the broadcast
and wake-up problems. 
Moreover, as mentioned above, the worst-case performance
of the considered communication tasks, and other problems building on them,
depends only on the topology of the communication graph, also called a reachability graph,
and not on specific location of nodes within reachability balls.
One of the implications is that geometric properties of reachability regions
studied in some previous works, c.f.,~\cite{Peleg-JACM}, do not influence worst-case
scenarios in ad hoc communication by more than $O(\log^2 n)$ factor - the factor
by which our algorithms are far from lower bounds.

\subsection{Model}

We consider the model of a wireless network consisting of {\em stations}, also called {\em nodes},
deployed into a metric space with {\em bounded growth property} of degree $\gamma$.\footnote{%
This notion generalizes the Euclidean $R^\gamma$ space; its formal definition is provided later in this section.}
All stations are identical, and therefore each of them has the same transmission power $P$
(we call it a {\em uniform power model}).

There are three fixed model parameters: path loss
$\alpha> \gamma$,
threshold
$\beta\ge 1$,
ambient noise
$\cN>0$.
We also assume a connectivity graph parameter $\eps\in (0,1)$.

The $SINR(v,u,\cT)$ ratio, for given stations $u,v$ and a set of (transmitting) stations $\cT$,
is defined as follows:
\vspace*{-0.5ex}
\begin{equation}\label{e:sinr}
SINR(v,u,\cT)
=
\frac{P\dist(v,u)^{-\alpha}}{\cN+\sum_{w\in\cT\setminus\{v\}}P_w\dist(w,u)^{-\alpha}}
\end{equation}
In the {\em Signal-to-Interference-and-Noise-Ratio (SINR) model}
a station $u$ successfully receives a message from a station $v$ in a round if
$v\in \cT$, $u\notin \cT$ and $SINR(v,u,\cT)\ge\beta \ ,$ where $\cT$ is the set of stations transmitting in that round.

\paragraph{Synchronization}
It is assumed that algorithms work synchronously in rounds. In general, we do not assume global clock ticking.
Note, however, that some kind of global synchronization can be achieved by appending a counter to every message sent throughout broadcast algorithm; we use this property in algorithms developed in this work.

\paragraph{Carrier sensing}
We consider the model {\em without carrier sensing}, that is,
a station $u$ has no other feedback from the wireless channel than
receiving or not receiving a message in a round $t$.

\paragraph{Knowledge of stations}
Each station knows the number of stations in the network, $n$.
Our algorithms also work when stations share, instead of $n$, an estimate $\nu \ge n$ of this value which is $O(n^c)$  for a fixed constant $c$.
%
We assume that nodes do not know the precise value of the SINR parameters
$\alpha$, $\beta$, and $\cN$ but instead know only upper and lower bounds for
the parameters (i.e., $\alpha_{\min}$ and $\alpha_{\max}$, $\beta_{\min}$ and $\beta_{\max}$,
$\cN_{\min}$ and $\cN_{\max}$).
For simplicity, in this version of the paper we perform calculations assuming
that exact values of these parameters are known.
In order to take into account uncertainty regarding those parameters,
it is sufficient to choose their maximal/minimal values depending on
the fact whether upper or lower estimates are provided.

\paragraph{Messages and initialization of stations} 
We consider two variants of initialization of stations: without spontaneous wake-up and with spontaneous wake-up.
In the former model each station (except some distinguished one(s)) sleeps till it obtains the 
message for the first time.
In the latter variant, all nodes are woken up 
at the same time and start an execution of an algorithm simultaneously.
(Observe that nodes can benefit from the spontaneous wake-up setting by performing a local preprocessing
simultaneously for the whole network.)

Each station can
either act as a sender or as a receiver during a round.
A sender can transmit a broadcast message
with attaching to it $O(\log n)$ additional bits.
%
Our algorithms are described from
a ``global'' perspective, i.e., we count rounds starting from the moment
when the first message is sent.
In order to synchronize stations in the model with non-spontaneous wake-up, we assume
that each message contains the number of rounds elapsed from the beginning
of the execution of the algorithm.
%
%

\paragraph{Ranges and uniformity}
The {\em communication range} $r$ is the radius of the
ball
in which a message transmitted
by a station is heard, provided no other station transmits at the same time.
Note that $r=(P/(\cN\beta))^{1/\alpha}$, where
$P$ is the transmission power of a station, c.f., Equation~(\ref{e:sinr}).
Without loss of generality we assume that $r=1$.
(Note that this assumption implies the relationship $P=\cN\beta$.)

\paragraph{Communication graph and graph notation}
The {\em communication graph} $G(V,E)$
of a given network
consists of all network nodes and edges $(v,u)$ such that $\dist(v,u)\leq (1-\eps)r=1-\eps$,
where $0<\eps<1$ is
a fixed model parameter. 
The meaning of the communication graph is as follows: even though the idealistic communication
range is $r$, it may be reached only in a very unrealistic case of single transmission in the whole
network, c.f.,~\cite{DGKN13}. In practice, however, many nodes located in different parts of the network often
transmit simultaneously, and therefore it is reasonable to assume that we may only
hope for a slightly smaller range to be achieved.
The communication graph, through restricting connections to ranges at most $1-\eps$, envisions the network of such ``reasonable reachability''. 
It has become a classic tool in the analysis of ad hoc communication tasks under the SINR physical model
c.f.,~\cite{DGKN13,JKRS13,YuHWTL12}. 

Note that the communication graph is symmetric for uniform networks.
By a {\em neighborhood} of a node $u$ we mean the set 
of all 
neighbors of $u$ in $G$, i.e., the set $\{w\,|\, (w,u)\in E(G)\}$.
The {\em graph distance} from $v$ to $w$ is equal to the length of a shortest path from $v$ to $w$
in the communication graph, where the length of a path is equal to the number of its edges.
The diameter $D$ of a network is equal to the diameter of its communication graph
(i.e., the largest graph distance between any pair of nodes),
provided the graph is connected.

\paragraph{Metric space}

Given a metric space with a distance function $\dist$, $B(v,r)$ for a point $v$ from the space and $r>0$
is equal to $\{w\,|\, \dist(v,w)\leq r\}$ and is called a {\em ball} with radius $r$ and center $v$. A {\em unit ball} is a ball with radius $1$. Moreover, let
$\chi(a,b)$ denote the number of balls with radius $b$ sufficient to cover a ball with
radius $a$.
Nodes of a network are embedded (as points) in a general metric space with a distance function $\dist$ that satisfies the following 
{\em bounded growth property}:
For every $d>0$, $c\in\NAT$ and a point $v$ in the metric space, the ball $B(v,c\cdot d)$ is
included in a union of  $O(c^\gamma)$ balls with radius $d$, where $\gamma$ is
a parameter called a
{\em dimension}
of the metric. (That is, $\chi(cd,d)=O(c^\gamma)$ for each $d>0$ and $c\in\NAT$.)
Note that this in particular implies that $B(v,(c+1)\cdot d)\setminus B(v,c\cdot d)$ can be covered
by $O(c^{\gamma-1})$ balls with radius $d$;
we will often rely on this property in our analysis when estimating the total strength of the
interference received at a node.

\paragraph{Broadcast problem}

In the broadcast problem, 
there is one distinguished node, called the {\em source},
which initially holds a piece of information (also called a {\em source message} or a
{\em broadcast message}).
The goal is to disseminate this message to all other nodes in a network
with connected communication graph.
We are interested in minimizing the {\em time complexity} of this task
being
the minimum number of rounds
after which, for all communication networks defined by some set of parameters,
the broadcast occurs with high probability. 
This time is counted since the source is activated.


\subsection{Previous and related work}
The algorithmic research on communication in the SINR networks started around 10 years ago.
Most papers concentrate on one-hop communication, which includes the local
broadcast problem \cite{GoussevskaiaMW08,HM12,YuWHL11},  link scheduling \cite{K12,HM11},
connectivity \cite{AvinLPP09,HM12SODA} and others.
Among them, the most related to this work are papers on local broadcast, in which each node has
to transmit a message only to its neighbors in the corresponding communication graph.
%
Using the local broadcast algorithm (e.g.,\ from \cite{HM12}) as a building block yields a solution for (global) broadcast
that runs in $O(D(\Delta+\log n)\log n)$ time, where $\Delta$ is the maximal degree of the communication graph.
However, since there is only one message to be propagated in the global broadcast,
we would like to avoid the dependence on potentially large parameter $\Delta$
(which could be necessary when all senders have different messages, but not in the case
of a single source global broadcast).

In order to address obstacles for multi-hop communication, various authors
take advantage of several features
helping to design efficient algorithms.
As for the broadcast problem
in the SINR model, Scheideler at al.\ \cite{SRS08} solve the problem in $O(D+\log^2n)$ rounds
using a tunable collision
detection and assuming that all stations start a protocol simultaneously,
which allow them to build an overlay structure
along which the message is then propagated.
Yu et al.\ \cite{YuHWTL12} solve the problem in $O(D+\log^2 n)$ rounds using power control,
allowing stations to decide
the strength of a transmitted signal in each step.
Moreover, their results works merely for a restricted family of networks,
excluding the most challenging scenarios.
Specifically, their algorithm works under assumption that, for each node $v$, its closest neighbor is in distance at most $1/3$. Moreover, a possibility of filtering out
messages received from large distances is necessary.

In \cite{JKRS13} an $O(D\log n+\log^2n)$ randomized algorithm and  in \cite{JKS13} an $O(D\log^2n)$
deterministic algorithm for networks deployed in the Euclidean space are presented, where stations know
their own positions (e.g., thanks to GPS devices).
%
Finally, Daum et al.\ \cite{DGKN13} designed an algorithm working in $O((D\log n)\log^{\alpha+1}R_s)$
rounds, provided stations know only granularity $R_s$ of the network (i.e., the maximum ratio
between actual distances of stations
connected by an edge in the communication graph) and do not use any other additional features.

If the model prevents successful direct transmissions between nodes which are not
connected in the communication graph, so called {\em weak device model},
the lower bound $\Omega(D\Delta)$ holds even
if stations know their positions on the plane \cite{JKS13i}, which separates that model from
the one considered in this work. For other related problems in this more harsh model see e.g.,~\cite{GoussevskaiaMW08, JK-DISC-12}.

In a related radio network model, the complexity of broadcasting is much better understood.
Its complexity in the model without collision detection is $\Theta((D+\log n)\log(n/D))$ \cite{CzumajRytter-FOCS-03,KP-DC-05}.
Interestingly, this lower bound was recently broken for the model with collision
detection \cite{GHK13}, in which a solution in $O(D+\text{polylog}(n))$ was
designed.
For the
easier case where all nodes start during the same round, it is currently unknown whether or not
formulas better than the ones in general graphs could be obtained,
but in unit disk graphs a solution of the form $O(D + \log^2 n)$ is likely possible
\cite{DGKN13}.
As shown e.g.\ in \cite{GKLW08,FAM13}, geometric graphs exhibit more efficient solutions than those
possible in a general graph model of radio networks.
%

\subsection{Our results}
The results of this paper state that the broadcast problem can be accomplished in $O(D\log^2 n)$ rounds in the model without spontaneous wake-up and in $O(D\log n +\log^2 n)$ 
rounds in the model with spontaneous wake-up.
Interestingly, this performance formulas does not depend on any geometric parameter
related with specific locations of nodes, only on parameters of communication graph
(expressing the relation whether nodes are within their transmission ranges or not).
This improves the result
of Daum et al.\ \cite{DGKN13} for $R_s=\omega(2^{(\log n)^{1/(1+\alpha)}})$,
where $R_s$ is the maximum ratio between
between actual distances in the metric space of stations which are connected by an edge in the communication
graph.
(Thus, in particular,
for $R_s=\Omega(n^{\delta})$, for any fixed $\delta>0$.) Moreover, our algorithm does not need information
about parameter $R_s$.
%
As $R_s$ might be even exponential wrt to $n$,\footnote{%
Consider, for example, $n$ stations $x_1,\ldots,x_n$ on a line such that $\dist(x_i,x_{i+1})=1/2^i$.}
this is the first solution with guaranteed $O(D\ \!\text{polylog}(n))$
complexity in the SINR networks without spontaneous wakeup, power assignment, carrier sensing
(tuned collision detection), or
any knowledge about location of nodes. 
%


As the main tool, we design a specific coloring algorithm associating with each
active station $v$ the probability $p_v$ which, when fixed, helps to solve other
communication problems efficiently, including the  consensus problem,
the leader election and the alert protocol problem. 
This coloring plays a role similar to backbone structures in many other communication models
(c.f.,~\cite{JK-DISC-12,YuWWY13}). 
%
In order to get rid of the dependence on the granularity parameter $R_s$, we use
a different approach to \cite{DGKN13}. They allow all stations to transmit
with constant probability, which may generate a lot of noise but makes
possible communication between stations within the smallest distance in the network.
We, on the other hand, start from very low probabilities of the order $1/n$
and increase them gradually until stations can hear reasonable number of messages.
Such a strategy reminds solutions to the local broadcast problem (e.g., in \cite{HM12});
however, unlike in those solutions (which did not need it),
the key and subtle issue in our approach
is to somehow distinguish by a station $v$ between the densities of the network in
close neighborhood $B(v,\eps/2)$ and
in broader neighborhood $B(v,1)$.
In order to tackle this issue without any geolocation information and other tools such as power control,
we proceed by interleaving two kinds of phases serving different purposes:
phases where stations transmit with some assigned probabilities (which intuitively grow up gradually)
with phases where probabilities of transmissions are ``scaled up'' based on local statistics
of successful transmissions with carefully probed transmission probabilities.
This approach faces various technical obstacles, mainly due to the lack of geolocation information,
which are addressed in the paper.

\paragraph{Organization of the paper}

Basic properties of simple transmission scenarios are given in Section~\ref{s:preliminaries}.
The main coloring tool, its details and construction can be found in Section~\ref{s:coloring}. 
Its applications to broadcasting in non-spontaneous and spontaneous settings are presented in Section~\ref{s:broadcast}.
Missing proofs can be found in the full version of the paper.




\section{Notations and Technical Preliminaries}
\label{s:preliminaries}

We say that an event happens in a network of $n$ stations {\em with high probability
(whp)} when the probability is at least $1-1/n^c$, for some constant $c>0$.\footnote{%
We often use union bounds to show
that some undesirable events happen with small probability.
Therefore, in order to carry on we usually require
for some basic events occurring during the analysis
to happen with probability at least $1-1/n^4$,
when saying that they occur whp. 
This requirement will guarantee that after applying all union 
bound arguments within the analysis, the probability that our 
algorithms accomplish their tasks correctly within the given 
time bounds is at least $1-1/n$.
}
%
An event occurs with {\em negligible} probability if its negation occurs whp.
In particular we prove in this paper that our algorithms succeed whp.

Given a metric space with the distance $\dist(\cdot,\cdot)$, we use the notation
$B(v,r)=\{v'\,|\, \dist(v,v')\leq r\}$. For a set of stations $A$, by $N(A)$ we denote the
set of their neighbors in the communication graph, i.e.,
$N(A)=\{w\,|\, \dist(v,w)\leq 1-\eps\mbox{ for some }v\in A\}$.

In order to simplify calculations, we assume that the constant hidden in the expressions $O(c^\gamma)$ determining the growth parameter of the metric space is equal to $1$ (this does not change the asymptotic complexity of our
algorithms).

Below, we formulate a basic property that a station transmitting successfully to a distance
larger than $1-\eps$ delivers its message to the neighbors of other stations in its close proximity.

\begin{fact}\labell{f:onbehalf}
If a station $v$ is transmitting in a round and its message can be successfully received
at each point $u$ such that $\dist(u,v)\leq 1-\eps/2$, then the message is received
by all neighbors of all nodes from $B(v,\eps/2)$.
\end{fact}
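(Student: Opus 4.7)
The plan is to reduce the claim to a direct application of the triangle inequality for the metric $\dist$. Fix any node $w \in B(v,\eps/2)$ and any neighbor $u$ of $w$ in the communication graph $G$; by definition of $G$, this means $\dist(w,u) \leq 1-\eps$, while membership in $B(v,\eps/2)$ gives $\dist(v,w) \leq \eps/2$.

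Next I would combine these two inequalities via the triangle inequality to obtain
\[
\dist(v,u) \;\leq\; \dist(v,w) + \dist(w,u) \;\leq\; \frac{\eps}{2} + (1-\eps) \;=\; 1-\frac{\eps}{2}.
\]
Thus $u$ lies in the ball $B(v,1-\eps/2)$ and is therefore one of the points at which, by hypothesis of the fact, $v$'s transmission is successfully received in this round. Since $w$ and $u$ were arbitrary, the message reaches every neighbor of every node of $B(v,\eps/2)$, which is exactly the conclusion.

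There is essentially no technical obstacle here: the statement is a structural observation about the communication graph rather than a probabilistic or SINR-level argument, and all the SINR content is packaged inside the hypothesis ``successfully received at each $u$ with $\dist(u,v)\le 1-\eps/2$''. The only point worth stating cleanly is that the argument genuinely uses the definition of a neighbor with respect to the thresholded reachability radius $1-\eps$ (rather than $r=1$), since it is precisely the gap $\eps$ that compensates for moving the reference point from $w$ to $v$ by at most $\eps/2$ and still keeps us within the $1-\eps/2$ reception zone.
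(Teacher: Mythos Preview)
Your proof is correct and is exactly the intended argument; the paper states this fact without proof because it is an immediate consequence of the triangle inequality together with the definition of the communication graph, precisely as you wrote.
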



Given a set of stations $T$ transmitting in a round and a station $u$, the
{\em interference} at $u$ is equal to $I_u=\sum_{w\in T\setminus\{v\}} P/(\dist(u,w))^\alpha$,
where $v$ is a station in the smallest distance from $u$ among the elements of $T$.

\begin{fact}
\label{f:close_int}
Let $x\le1/2^{1/\alpha}$. If the interference at some receiver $u$ is at most $\cN/(2x^\alpha)$, then
it can hear the transmitter $v$ from the distance $x$.
\end{fact}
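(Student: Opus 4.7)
The plan is to unwind the SINR definition directly, using the normalization $r=1$ that was fixed earlier in the model section. Recall from the excerpt that $r=(P/(\cN\beta))^{1/\alpha}=1$, which gives the clean identity $P=\cN\beta$. This relation is the key algebraic ingredient that makes the boundary threshold $x^\alpha\le 1/2$ match the stated hypothesis $x\le 1/2^{1/\alpha}$.

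Concretely, I would proceed as follows. Let $v$ be the closest transmitter in $T$ to $u$, at distance $\dist(u,v)=x$, and let $I_u$ denote the interference at $u$ as defined just before the fact. To show that $u$ successfully receives the message of $v$, I need to verify the SINR inequality
\[
SINR(v,u,\cT)\;=\;\frac{P\,x^{-\alpha}}{\cN+I_u}\;\ge\;\beta.
\]
Using the hypothesis $I_u\le \cN/(2x^\alpha)$, I would lower-bound the left-hand side by
\[
\frac{P\,x^{-\alpha}}{\cN+\cN/(2x^\alpha)}
\;=\;\frac{2P}{\cN(2x^\alpha+1)}.
\]
Substituting $P=\cN\beta$ reduces the inequality $\frac{2P}{\cN(2x^\alpha+1)}\ge\beta$ to the purely geometric condition $2\ge 2x^\alpha+1$, i.e.\ $x^\alpha\le 1/2$, which is exactly the assumption $x\le 1/2^{1/\alpha}$.

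There is essentially no obstacle here: the statement is a one-line algebraic consequence of the SINR definition once one remembers that $P=\cN\beta$. The only point requiring a sentence of care is that the interference term $I_u$ in the denominator of $SINR$ excludes the designated transmitter $v$, which is already built into the definition of $I_u$ given just above the fact (the smallest-distance transmitter is excluded from the sum), so the notational match is automatic.
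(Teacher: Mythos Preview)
Your proposal is correct and matches the paper's own proof essentially line for line: both use $P=\cN\beta$, plug the interference bound $I_u\le \cN/(2x^\alpha)$ into the SINR expression, and reduce the inequality to $x^\alpha\le 1/2$. The only difference is cosmetic---the paper simplifies to $\beta/(x^\alpha+1/2)\ge\beta$ where you write $2\beta/(2x^\alpha+1)\ge\beta$.
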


\begin{proof} Let us recall assumption that $P=\cN\beta$. We have
 \[
  SINR(v,u,T)
  \ge
  \frac{\beta\cN}{x^\alpha}/(\cN+\frac{\cN}{2x^\alpha}) = \beta/(x^\alpha+1/2) \ge \beta
  \ ,
 \]
where the former inequality follows from the bound on the interference and the latter from
the assumption $x\le1/2^{1/\alpha}$.
\end{proof}

\begin{fact}
\label{fact01}
If the interference at some receiver $u$ is at most $\cN\alpha x$, then
it can hear the transmitter $v$ from the distance $1-x$.
\end{fact}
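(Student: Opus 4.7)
The plan is to mirror the proof of Fact~\ref{f:close_int}: expand $SINR(v,u,T)$ using the normalization $P=\cN\beta$ (which follows from $r=1$), the transmitter--receiver distance $\dist(u,v)=1-x$, and the hypothesis $I_u\leq \cN\alpha x$ on the interference. Substituting into~(\ref{e:sinr}) yields
\[
SINR(v,u,T)\;\geq\;\frac{\cN\beta\,(1-x)^{-\alpha}}{\cN+\cN\alpha x}\;=\;\frac{\beta}{(1-x)^\alpha(1+\alpha x)},
\]
so the claim reduces to the purely analytic inequality
\[
g(x)\;:=\;(1-x)^\alpha(1+\alpha x)\;\leq\;1\qquad\text{for all }x\in[0,1).
\]

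I would verify this by elementary calculus: $g(0)=1$, and a direct computation gives
\[
g'(x)\;=\;-\alpha(1-x)^{\alpha-1}(1+\alpha x)+\alpha(1-x)^{\alpha}\;=\;-\alpha(\alpha+1)\,x\,(1-x)^{\alpha-1},
\]
which is non-positive on $[0,1)$ since $\alpha>0$ and $(1-x)^{\alpha-1}\geq 0$ on that interval. Hence $g$ is non-increasing on $[0,1)$, so $g(x)\leq g(0)=1$, and $SINR(v,u,T)\geq\beta$ as required.

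There is no genuine obstacle here: this fact is the natural complement to Fact~\ref{f:close_int}, calibrated to the regime where the receiver sits near the outer edge of the communication disk (distance $1-x$ with $x$ small, against a correspondingly tighter interference budget of order $x$) rather than close to the transmitter. The only mild subtlety is the one-line monotonicity estimate for $g$, which could equivalently be derived from a Bernoulli-type bound such as $(1-x)^\alpha\leq 1/(1+\alpha x)$ for $x\in[0,1)$.
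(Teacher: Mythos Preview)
Your proof is correct and follows essentially the same line as the paper: both reduce the claim to the inequality $(1-x)^\alpha(1+\alpha x)\le 1$ after substituting $P=\cN\beta$ and the interference bound into the SINR formula. The only cosmetic difference is that the paper dispatches this inequality via Bernoulli's inequality $(1+x)^\alpha\ge 1+\alpha x$, rewriting the bound as $(1-x)^\alpha(1+x)^\alpha=(1-x^2)^\alpha\le 1$, whereas you verify it directly by computing $g'(x)=-\alpha(\alpha+1)x(1-x)^{\alpha-1}\le 0$; you even anticipate the Bernoulli shortcut in your closing remark.
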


\begin{proof}
By the Bernoulli inequality we get
$(1+x)^\alpha \ge 1+\alpha x$. Thus
\[
SINR(v,u,T)
\ge
\frac{P/(1-x)^\alpha}{\cN+\cN\alpha x}
\ge
\frac{P}{\cN (1+x)^\alpha(1-x)^\alpha}
=
\frac{P}{\cN (1-x^2)^\alpha}
\ge
\frac{P}{\cN}=\beta
\ .
\]
\end{proof}


Consider a scenario where every station $v$ is
assigned a variable $p_v$ being its transmission probability.

\begin{fact}\labell{f:sum}\cite{JS05}
Assume that $\sum_{v\in A}p_v=s\leq 1/2$ for some set of stations $A$.
Then the probability that exactly one element of $A$ transmits
is at least $s/2$ and at most $s$.
\end{fact}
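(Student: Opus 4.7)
}
The plan is to write the probability of exactly one transmission as an explicit sum and then bound each factor. Let $q_v = \prod_{w \in A \setminus \{v\}} (1 - p_w)$, so the probability that exactly one element of $A$ transmits is
\[
P_1 \;=\; \sum_{v \in A} p_v \, q_v .
\]
Since each $q_v \in [0,1]$, the upper bound $P_1 \le \sum_{v \in A} p_v = s$ is immediate. The whole content of the statement is therefore the lower bound $P_1 \ge s/2$.

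For the lower bound I would first apply the Weierstrass product inequality,
\[
\prod_{w \in A \setminus \{v\}} (1 - p_w) \;\ge\; 1 - \sum_{w \in A \setminus \{v\}} p_w \;=\; 1 - (s - p_v) \;\ge\; 1 - s ,
\]
which is valid because each $p_w \in [0,1]$. Plugging this bound in gives
\[
P_1 \;\ge\; \sum_{v \in A} p_v (1 - s) \;=\; s (1 - s) .
\]
The hypothesis $s \le 1/2$ then yields $1 - s \ge 1/2$, hence $P_1 \ge s/2$, as required.

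There is essentially no obstacle here: the only thing to be careful about is the direction of the Weierstrass inequality and making sure the sum in the exponent of the product uses $s - p_v$, not $s$, before relaxing to $1 - s$. Since the fact is cited from \cite{JS05}, I would keep the proof to these few lines (or simply refer to the reference) rather than elaborate further.
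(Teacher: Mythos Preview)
Your proof is correct and is the standard argument for this inequality. The paper itself does not supply a proof of this fact at all --- it simply cites \cite{JS05} --- so there is nothing to compare against; your write-up would serve perfectly well as an inline justification.
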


\begin{fact}\labell{f:notransmit}\cite{JS05}
Let $p_v\le1/2$ for every station $v\in A$. Then the probability that no
station from $A$ transmits is at least $(1/4)^{\sum_{v\in A}p_v}$.
\end{fact}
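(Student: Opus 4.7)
My plan is to first use independence of the per-station transmission coin flips, then reduce the claim to a one-variable inequality that I verify by a standard concavity argument.

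Step one: since each station $v \in A$ decides independently whether to transmit with probability $p_v$, the probability that no station from $A$ transmits equals $\prod_{v \in A}(1-p_v)$. Hence the statement is equivalent to showing
\[
\prod_{v \in A}(1-p_v) \;\geq\; \prod_{v \in A} 4^{-p_v},
\]
which reduces, by multiplicativity, to the pointwise inequality $1-p \geq 4^{-p}$ for every $p \in [0,1/2]$.

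Step two: verify this pointwise inequality. Taking natural logarithms, I need to show that
\[
g(p) \;:=\; \ln(1-p) + p\ln 4 \;\geq\; 0 \qquad \text{for } p\in [0,1/2].
\]
At the endpoints, $g(0)=0$ and $g(1/2)=\ln(1/2)+\tfrac{1}{2}\ln 4 = -\ln 2 + \ln 2 = 0$. Moreover, $g''(p) = -1/(1-p)^2 < 0$ on the interval, so $g$ is strictly concave. A concave function that vanishes at both endpoints of an interval is nonnegative throughout that interval, so $g(p)\geq 0$ on $[0,1/2]$, which is exactly what I need.

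Combining the two steps yields $\prod_{v \in A}(1-p_v) \geq (1/4)^{\sum_{v\in A} p_v}$, as claimed. There is no real obstacle here: the only small point to be careful about is that the inequality $1-p \geq 4^{-p}$ really does require $p \leq 1/2$ (it fails for $p$ slightly larger than $1/2$ since the two sides cross at $p=1/2$), which is exactly the hypothesis of the fact. The bound is tight precisely at $p_v = 1/2$.
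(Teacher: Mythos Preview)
Your proof is correct. The paper does not actually give its own proof of this fact; it simply cites an external reference \cite{JS05}. Your argument --- independence followed by the pointwise inequality $1-p \ge 4^{-p}$ on $[0,1/2]$, verified by concavity of $\ln(1-p)+p\ln 4$ with equality at both endpoints --- is the standard elementary derivation and is exactly what one would expect.
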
 

We say that {\em bounded density property} is satisfied with the parameter $C>0$
if 
$\sum_{w\in B} p_w \le C$ for every unit ball $B$.
The {\em effective communication property} is satisfied if the probability that a station $v$ hears $w$ when $w$ is the only transmitting station in $B(v,2/3)$ is at least $1/2$.

\begin{fact}\label{fact7}
For any network parameters $\alpha>\gamma$, $\beta\ge 1$, $\cN>0$ and $\eps<1$,
there exists a constant $C>0$ such that  if  the bounded density property
is satisfied with any parameter $C_1<C$, then
the effective communication property is satisfied as well.
\end{fact}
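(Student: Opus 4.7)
The plan is to bound the expected interference at $v$ from stations outside $B(v,2/3)$ and then apply Markov's inequality, choosing $C$ small enough that with probability at least $1/2$ the interference lies below the threshold required by Fact~\ref{fact01} for $v$ to decode from distance $2/3$. Because per-station transmission coin flips are independent, conditioning on the event ``$w$ is the only transmitter in $B(v,2/3)$'' does not alter the distribution of transmissions outside that ball, so the expected outside interference can be computed unconditionally. Moreover, under this conditioning $w$ is automatically the closest transmitter to $v$, so the SINR interference term $I_v$ equals the sum of contributions from stations strictly outside $B(v,2/3)$.

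To control $E[I_v]$ I would decompose the exterior of $B(v,2/3)$ into the near shell $B(v,1)\setminus B(v,2/3)$ together with, for each integer $c\ge 1$, the annulus $A_c = B(v,c+1)\setminus B(v,c)$. By the bounded growth property (as noted in Section~\ref{s:preliminaries}), the near shell is covered by $O(1)$ unit balls and $A_c$ by $O(c^{\gamma-1})$ unit balls. The bounded density hypothesis with parameter $C_1$ then gives a total transmission probability $O(C_1\cdot c^{\gamma-1})$ in $A_c$, while each station in $A_c$ contributes at most $P/c^\alpha$ to the interference at $v$, and each in the near shell at most $P/(2/3)^\alpha = O(P)$. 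Summing,
\[
E[I_v] \;\le\; O(C_1 P) \;+\; \sum_{c\ge 1} O(C_1 P)\cdot c^{\gamma-1-\alpha} \;\le\; K\cdot C_1\cdot P \ ,
\]
where $K=K(\alpha,\gamma)$ is finite precisely because $\alpha>\gamma$ forces the exponent $\gamma-1-\alpha<-1$, making the series convergent.

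To finish, I would apply Markov's inequality with threshold $\cN\alpha/3$: using $P=\cN\beta$,
\[
\Pr\!\bigl[\,I_v > \cN\alpha/3\,\bigr] \;\le\; \frac{E[I_v]}{\cN\alpha/3} \;\le\; \frac{3K\beta}{\alpha}\cdot C_1 \ .
\]
Setting $C=\alpha/(6K\beta)$ makes this probability at most $1/2$ whenever $C_1<C$. When $I_v\le \cN\alpha/3$, Fact~\ref{fact01} applied with $x=1/3$ (so that $1-x=2/3$) certifies that $v$ successfully receives $w$'s message, since $w$ lies at distance at most $2/3$ and is the closest transmitter. Therefore $v$ hears $w$ with probability at least $1/2$, which is exactly the effective communication property.

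The main obstacle, and the only place the hypothesis $\alpha>\gamma$ is essential, is the convergence of the shell sum; everything else is bookkeeping of constants. Note that $\eps$ does not enter the determination of $K$---only $\alpha$ and $\gamma$ do---while $\beta$ and $\cN$ enter through the identity $P=\cN\beta$ in the Markov step, consistent with the statement that $C$ depends on all four parameters.
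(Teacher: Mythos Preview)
Your proof is correct and follows essentially the same route as the paper: bound the expected interference at $v$ via an annular decomposition, use the bounded growth property to cover each annulus by $O(c^{\gamma-1})$ unit balls, sum the resulting geometric-type series (convergent because $\alpha>\gamma$), and then apply Markov's inequality together with Fact~\ref{fact01} at $x=1/3$. The only cosmetic difference is that the paper indexes its annuli at radii that are multiples of $2/3$ rather than splitting off a near shell and then using integer radii, and it does not spell out the independence argument you give for why conditioning on ``$w$ is the only transmitter in $B(v,2/3)$'' leaves the outside transmissions unchanged; your version is slightly more careful on that point.
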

\begin{proof}
Assume that $w$ is the only transmitter in $B(v,2/3)$
and
$\sum_{w\in B} p_w \le C$ for every unit ball $B$
and a constant $C$.
Then, $v$ receives the message from $w$  by Fact~\ref{fact01},
provided the interference from the remaining area is smaller than $\cN\alpha\cdot \frac13$.
The expected value of this interference under the bounded density property with the parameter $C$ is
$$\begin{array}{rcl}
E(I_v) & \leq  & C\sum_{i>0}\frac{P\cdot O(i^{\gamma-1})}{\cN\cdot(\frac23i)^\alpha}=\frac{CP(3/2)^\alpha}{\cN}O(\sum_{i>0}i^{\gamma-\alpha-1})
\leq  C\cdot C'
\end{array}$$
where $C'$ is a constant depending on $\gamma$, $\alpha$, $\beta$, $\cN$. 
The first among the above inequalities follows from the bounded growth 
property and the last one from the assumption that $\gamma<\alpha$.
Thus if $C<\frac{\frac16\cN\alpha}{C'}$ then $E(I_v)\leq
C\cdot C'<\frac12\cdot \frac13\alpha$.
Using Markov bound, we get $P(I_v>\cN\alpha\cdot \frac13)<1/2$ and the probability that
$v$ receives a message from $w$ is at least $1/2$ by Fact~\ref{fact01}.
\end{proof}

\section{Network Coloring}
\label{s:coloring}

The key ingredient of broadcasting algorithms presented in this paper is the procedure
StabilizeProbability
(Algorithm~1),
which assigns probability (``color'') $p_v$
from the set $$\{2^i p_{\text{start}}\,|\, i\in[0,\lceil\log(p_{\max}/p_{\text{start}})\rceil \}$$ to each station $v$ participating in an execution, where $p_{\text{start}}=\Theta(1/n)$ and $p_{\max}$ is a constant which will
be specified later. Thus, the number of colors is $O(\log n)$.

Before giving details of the procedure StabilizeProbability, we state the key properties that we want the procedure to satisfy.
%
We express them as properties of the obtained coloring, described by the following lemmas. These lemmas are true for some constants $C_1$ and $C_2$
that depend on $\eps$, $\gamma$, the parameters of the SINR model and constants chosen in the algorithm.

\begin{lemma}
\labell{lemma1}
After an execution of StabilizeProbability for a set of stations $A$, the inequality
$$\sum_{\substack{w: p_w=p \\ w \in B}} p_w < C_1$$
holds for every color $p$ and unit ball $B$ whp.
\end{lemma}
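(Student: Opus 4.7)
The plan is to show that after the StabilizeProbability procedure freezes each station at its final color, no color can be ``overloaded'' in any unit ball. The natural strategy is a contradiction argument tied to whatever threshold rule makes a station stop promoting its probability: if too many stations in some unit ball ended up at the same color $p$, then at the moment they were tested for freezing at level $p$ the local statistics they observed (probability of no transmission, probability of exactly one transmission, or the interference itself) would have fallen outside the range that triggers freezing, and the algorithm would have continued doubling their probabilities.

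Concretely, I would fix a color $p$ and a unit ball $B$, and let $A_{p,B}$ be the set of stations in $A\cap B$ whose final color equals $p$; the goal becomes $p\cdot|A_{p,B}|<C_1$. I would then isolate the phase in which stations in $A_{p,B}$ either transitioned from probability $p/2$ to $p$ or were last tested at $p$. Because colors are dyadic, $p_v\in\{2^i p_{\text{start}}\}$, only $O(\log n)$ values of $p$ arise, which will make the final union bound cheap. The core calculation uses Facts~\ref{f:sum} and~\ref{f:notransmit} to translate the aggregate mass $s:=\sum_{w\in A_{p,B}}p_w$ into an expected number of successful/failed local transmissions, and then a Chernoff concentration argument to say that the observed counts at a typical listener are close to their expectation whp. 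The freezing rule, when combined with this concentration, forces $s<C_1$.

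The main obstacle I expect is decoupling the freezing decisions of stations in $B$ from the activity of stations outside $B$. Transmissions from beyond $B$ inject extra interference and can bias the local statistics that trigger freezing. To handle this I would rely on the bounded growth property: sum the contribution from each annular shell $B(v,(i+1))\setminus B(v,i)$ using the $O(i^{\gamma-1})$ covering bound, and note that since $\alpha>\gamma$ the geometric tail converges, exactly as in the proof of Fact~\ref{fact7}. An inductive invariant, carried across phases, that the bounded density property has held for all previously-frozen colors ensures that the outside-interference contribution is a bounded constant at each phase, so the local-threshold calculation inside $B$ is not corrupted by more than a constant factor.

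Finally, I would union-bound over the $O(\log n)$ colors and over a polynomial collection of representative unit balls (it suffices to consider one ball per station, so $n$ balls, since any unit ball containing two or more stations is covered by $O(1)$ such station-centered balls by bounded growth). The per-event failure probability is driven below $1/n^{c+2}$ by choosing the Chernoff deviation appropriately, so the aggregate probability of the bad event is negligible. The constant $C_1$ is then the one coming from the freezing-rule threshold, adjusted by the constant factors absorbed in the outside-interference bound and in the Chernoff deviation.
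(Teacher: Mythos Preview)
Your proposal has a genuine gap in the direction of the contradiction. You argue that if too many stations ended up frozen at color $p$ in some unit ball $B$, then at the moment they were tested at level $p$ the observed statistics would have fallen \emph{outside} the freezing range, so they would have continued doubling. But look at the freezing rule: a station freezes when \textsc{DensityTest} \emph{and} \textsc{Playoff} both return True, and each of these returns True when the station has received \emph{at least} a threshold number of messages. There is no upper cut-off. High local density makes \textsc{DensityTest} \emph{more} likely to pass, not less. Even if you argue that extremely high density causes so many collisions that nothing is decoded (so both tests fail), the consequence is that nobody freezes, everybody doubles, and the density in $B$ doubles with it; you have not produced a contradiction, you have produced runaway growth that dumps all the mass onto the terminal color $2p_{\max}$, violating the lemma there. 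Your ``inductive invariant'' about previously frozen colors is also idle: once a station quits in line~6 it stops transmitting, so frozen stations contribute no interference to later tests.

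The paper's argument runs in the opposite direction. It maintains, by induction over phases, the invariant that the \emph{active} mass in every unit ball is at most $C_1$ at the start of each phase (Lemma~\ref{l:turningoff}); the per-color bound then follows because the stations receiving color $p$ during a phase are a subset of those active at its start. The inductive step is not a statistical threshold argument: it shows that whenever the active mass in $B(v,1)$ is at least $C_1/2$, then in \emph{each} of the $c'$ iterations of (\textsc{DensityTest}, \textsc{Playoff}) a set $S\subseteq B(v,4/3)$ of constant probability mass is guaranteed to pass both tests and be switched off whp (Lemma~\ref{l:playoff}). The heart of this is a geometric construction (Lemma~\ref{l:playoff2}) that locates a small ball $D_0$ of radius $r\le(2q/(c_\eps C_1))^{1/\gamma}$ carrying mass at least $q$, surrounded by a buffer in which every $r$-ball carries only $O(q)$ mass, and then shows (Lemma~\ref{l:receiving}) that stations in $D_0$ hear a message with constant probability during \textsc{Playoff} despite the scaled-up probabilities. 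Your outline has no analogue of this removal mechanism, and the $c'$-fold loop in line~4 of the algorithm---which exists precisely to drain dense balls before the doubling in line~8---plays no role in your sketch.
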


\begin{lemma}
\labell{lemma2}
After an execution of StabilizeProbability on a set of stations $A$, for every
$v\in A$ there exists a color $p$ such that the following inequality holds whp:
$$\sum_{\substack{w: p_w=p \\ w \in B(v,\eps/2)}} p_w \ge C_2 \ . $$
\end{lemma}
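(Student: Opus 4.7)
The plan is to prove Lemma~\ref{lemma2} by a per-station contradiction argument, exploiting the dichotomy imposed by the design of \textsc{StabilizeProbability}: for every $v \in A$, either $v$'s probability is allowed to rise all the way to $p_{\max}$, or $v$ freezes its probability at some intermediate color after witnessing a ``density event'' (roughly, after registering enough successful receptions during a probe phase). Fix a station $v \in A$ and suppose, toward contradiction, that for every color $p$ the local sum
\[ S_p \ :=\ \sum_{\substack{w:\ p_w = p \\ w \in B(v, \eps/2)}} p_w \]
is strictly less than a small constant $C_2$ to be chosen later.

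The easy case is $p_v = p_{\max}$: since $v$ itself lies in $B(v,\eps/2)$ and contributes $p_{\max}$ to $S_{p_{\max}}$, any choice $C_2 \le p_{\max}$ already yields the conclusion with $p = p_{\max}$. So I may assume $p_v < p_{\max}$, which by the design of \textsc{StabilizeProbability} forces the stopping rule at $v$ to have fired at color $p_v$ during its probing phase. The heart of the plan is to show that, if $S_p < C_2$ for every $p$, then the probability of the stopping event is negligible. By Fact~\ref{f:sum}, in any single probing round at color $p$ the chance that exactly one station of color $p$ from $B(v, \eps/2)$ transmits is at most $S_p < C_2$; and by Fact~\ref{f:close_int} together with the bounded density given by Lemma~\ref{lemma1}, the interference picture otherwise prevents $v$ from hearing a ``close'' sender with more than small probability. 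Hence the per-round chance that $v$ registers a successful close reception is $O(C_2)$. A Chernoff/union bound over the $O(\log n)$ colors and the polylogarithmic number of rounds spent at each color makes the probability that the stopping event ever triggers at color $< p_{\max}$ negligible, contradicting the assumption $p_v < p_{\max}$.

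The main obstacle is precisely the one emphasized in the introduction: without geolocation, $v$ cannot directly distinguish receptions originating inside $B(v, \eps/2)$ from those arriving from the surrounding annulus $B(v,1) \setminus B(v, \eps/2)$. The key technical step will therefore be a careful interference estimate showing that, under the upper bound already secured in Lemma~\ref{lemma1}, senders lying in the annulus are simultaneously too far and too infrequently isolated to account for a constant fraction of $v$'s observed successes; thus only mass concentrated in $B(v, \eps/2)$ can sustain the stopping rule. Once that is in place, choosing $C_2$ small enough relative to the constants appearing in Facts~\ref{f:sum}, \ref{f:close_int}, \ref{fact01} and Lemma~\ref{lemma1}, and union bounding over all $v \in A$, completes the proof with high probability.
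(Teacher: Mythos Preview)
Your proposal has a genuine gap in the interference argument, and it stems from missing the two-test structure of \textsc{StabilizeProbability}. You write that ``under the upper bound already secured in Lemma~\ref{lemma1}, senders lying in the annulus are simultaneously too far and too infrequently isolated to account for a constant fraction of $v$'s observed successes.'' This is backwards. Lemma~\ref{lemma1}'s upper bound $C_1$ is chosen precisely so that the effective communication property (Fact~\ref{fact7}) holds: when density per unit ball is at most $C_1$, a lone transmitter in $B(v,2/3)$ \emph{is} heard by $v$ with constant probability. So annulus senders absolutely can account for a constant fraction of $v$'s receptions --- indeed, that is exactly why \textsc{DensityTest} can return True even when $B(v,\eps/2)$ is empty. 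The upper bound from Lemma~\ref{lemma1} helps far transmissions succeed; it does not suppress them.

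The mechanism the paper actually uses is the opposite one, and it is the whole point of separating \textsc{DensityTest} from \textsc{Playoff}. The stopping rule requires \emph{both} tests to pass. If \textsc{DensityTest}$(v)$ passes, then by Proposition~\ref{prop:density}(2) we get a \emph{lower} bound $\sum_{w\in B(v,1)} p_w \ge C_1 c_d$. In \textsc{Playoff} every probability is multiplied by the large constant $c_\eps$, so the annulus $B(v,1)\setminus B(v,\eps/2)$ now carries mass of order $c_\eps C_1 c_d$, and one shows (via a Chernoff bound on the total received signal $S_v$) that this makes $S_v > 2\beta\cN/\eps'^\alpha$ with probability at least $1-c_3/(4c_2)$, which by an SINR calculation blocks any reception from outside $B(v,\eps')$. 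Hence during \textsc{Playoff} the only way $v$ hears anything is from $B(v,\eps')$, and if that ball has scaled mass below $2C_2' = 2C_2 c_\eps$, the per-round reception probability is below $c_3/(2c_2)$ and \textsc{Playoff} returns False whp. Your sketch never mentions the $c_\eps$ scaling or the role of the \textsc{DensityTest} lower bound, and without them the annulus-blocking step simply does not go through.

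A secondary issue: even granting that $v$ stops only when the \emph{current} active mass in $B(v,\eps')$ is large, you still need to extract a single final color with large mass. The paper does this by tracking the last phase $k$ at which $\sum_{w\in B_k} p_w c_\eps \ge 2C_2'$ and observing that, since probabilities double between phases, the stations quitting in phase $k{+}1$ alone carry mass at least $C_2'$. Your contradiction hypothesis ``$S_p < C_2$ for every final color $p$'' does not directly control the active mass at the moment $v$ stops, so this step also needs to be filled in.
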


In this section we describe formally the algorithm StabilizeProbability and prove
that it satisfies Lemma~\ref{lemma1} and Lemma~\ref{lemma2} for appropriate
constants $C_1$ and $C_2$. 
Its pseudo-code is given as Algorithm~1. 
The pseudocode
is missing information about the actual values of constants $c_0$, $c_1$, $c_2$, $c_3$,
$c'$, $c_\eps$, $C_2$, and $p_{\text{start}}$ used in the algorithm.
We will choose the appropriate values for those constant in the analysis of
the algorithm and its properties.

Algorithm StabilizeProbability performs two kinds of tests, defined by sub-routines
DensityTest and Playoff, each of them taking $O(\log n)$ rounds.
The while loop of StabilizeProbability is repeated $O(\log n)$ times, since
each node $v$ starts with $p_v=\Theta(1/n)$, increases $p_v$ twice in each repetition
of the loop and finishes either in line 6 or after achieving $p_v\geq p_{\max}=\Theta(1)$.
Therefore, the following claim holds.

\begin{fact}
\label{f:SP-time}
Algorithm StabilizeProbability works in $O(\log^2 n)$ rounds whp.
\end{fact}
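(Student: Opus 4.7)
The plan is to decompose the round count into an outer factor (number of iterations of the while loop) and an inner factor (cost of one iteration), each of which is $O(\log n)$.

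For the outer factor, I would argue deterministically as follows. Every node $v$ maintains its probability $p_v$ in the geometric set $\{2^i p_{\text{start}} : 0 \le i \le \lceil \log(p_{\max}/p_{\text{start}})\rceil\}$; since $p_{\text{start}}=\Theta(1/n)$ and $p_{\max}=\Theta(1)$, this set has cardinality $\lceil\log(p_{\max}/p_{\text{start}})\rceil = O(\log n)$. In each iteration of the while loop a surviving node either exits in line~6 or doubles $p_v$, so after $O(\log n)$ iterations each node has either exited or reached $p_v \ge p_{\max}$ and exited on the loop guard. Because the loop structure is driven by the common counter (all nodes perform the same number of rounds of DensityTest and Playoff in each iteration regardless of whether they have exited the loop locally), the total number of outer iterations globally is also $O(\log n)$.

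For the inner factor, I would simply appeal to the specification of the two sub-routines: DensityTest and Playoff each execute for $O(\log n)$ rounds by construction, as noted immediately before the fact. Hence one iteration of the outer loop costs $O(\log n)$ rounds, and the overall running time is $O(\log n)\cdot O(\log n) = O(\log^2 n)$.

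There is no real obstacle here; the statement is essentially a bookkeeping consequence of the design. The only subtlety worth flagging is the appearance of ``whp'' in the statement: the round bound itself is a deterministic worst-case bound, so the ``whp'' qualifier is inherited from the probabilistic guarantees of Lemmas~\ref{lemma1} and~\ref{lemma2}, which characterize the \emph{correctness} of the coloring produced by StabilizeProbability rather than its running time. Thus, modulo this remark, the proof amounts to multiplying the two $O(\log n)$ bounds above.
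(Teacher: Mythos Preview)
Your proposal is correct and matches the paper's argument essentially verbatim: the paper simply observes that the while loop runs $O(\log n)$ times (since $p_v$ doubles from $\Theta(1/n)$ to $\Theta(1)$) and that each iteration costs $O(\log n)$ rounds via DensityTest and Playoff. Your remark that the ``whp'' qualifier is superfluous for the running-time bound itself is apt; the paper does not comment on this.
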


In Section~\ref{sec:col:over},
we give a general idea why
StabilizeProbability satisfies Lemma~\ref{lemma1} and Lemma~\ref{lemma2}.

In Section~\ref{sec:density}, the constants $c_0, c_1$ in the
sub-routine
DensityTest are set so that this procedure
helps each station $v$ to estimate whether density (i.e., the sum of probabilities assigned to stations)
in $B(v,1)$  already achieved
constant value. 

Proofs of Lemma~\ref{lemma1} and Lemma~\ref{lemma2} 
are presented in separate sections. 
In Section~\ref{sec:lemma1}, containing the proof of Lemma~\ref{lemma1},
the constants $c_2$, $c_3$ for sub-routine Playoff
are chosen in order to guarantee that the regions (unit balls) with largest density will
become sparser, i.e., some stations switch off (line 6 of Algorithm~1) in them in each execution of Playoff whp.
Combined with the appropriate choice of the constant $c'$ from Playoff, this will assure
that Lemma~\ref{lemma1} is satisfied.

In Section~\ref{sec:lemma2},
dedicated to the proof of Lemma~\ref{lemma2},
the constant $c_\eps$ is chosen in order to make very unlikely the situations in which a station $v$ switches off when the sum of probabilities of active stations in 
$B(v,\eps/2)$ is very small. This property, combined with the fact that probabilities of 
``active'' stations (i.e., stations which are not switched off in line 6 of Algorithm~1) grow up to the constant $p_{\max}$, will lead to the statement of Lemma~\ref{lemma2}.

\subsection{Overview of the algorithm}\labell{sec:col:over}


The coloring algorithm assigns probability/color $p_v$ to each
active station $v\in A$ such that there are at most $\log n$ various colors and simultaneously
the following two properties hold:
\begin{enumerate}

\item
For each color, the sum of probabilities of stations in this color in each unit ball is at most $C_1$, for some constant $C_1$
(Lemma~\ref{lemma1}).
\item
For each active station $v$, there exists a color such that the sum of probabilities of this color in
the ball $B(v,\eps/2)$ is
at least $C_2$, for some constant $C_2$
(Lemma~\ref{lemma2}).
\end{enumerate}
The former property assures that, when all stations transmit with assigned probabilities,
the expected interference coming from the whole network
is small at any station
(this follows from the assumption that $\alpha>\gamma$, which implies $\sum_{i\in\NAT}i^{\gamma-1-\alpha}=O(1)$).
Thanks to this property, if a station $v$ is the only transmitter in $B(v,2d)$ for 
a constant
$d\leq 1-\eps$ then the message transmitted by $v$ can be received in each point of $B(v,d)$
with constant probability.
The latter property on the other hand guarantees that for each station $v$, the probability
that a station from $B(v,\eps/2)$ transmits is constant as well, for some color.
Both properties combined imply that, each station from $N(A)$ receives a message with
probability $\Omega(1/\log n)$ in a round if each station $v\in A$ transmits with probability $p_v/\log n$.

There is some intuition behind the coloring algorithm. The optimal probability for a station
$v$ to transmit is approximately $1/|\{w:\dist(v,w)\leq \eps/2\}|$. This would ensure that
the sum $\sum_{B(v,\eps/2)}p_v$ is limited from below by constant $C_2$ and $C_1$ is of the order of $C_2/\eps^\gamma$. 
%
The algorithm starts from low probabilities (smaller than $C_1/n$), continuously increasing them.
Once a station $v$ starts receiving messages from others, it assumes that the sum of probabilities
in $B(v,1)$ is constant, which indicates that further increase of all probabilities in $B(v,1)$
could break the property 1 (Lemma~\ref{lemma1}).
When applied to the task of local broadcasting \cite{HM12}, this means that the ``right''
probability is reached (so, probabilities could be frozen).
%
%
If stations are uniformly distributed, then this implies that the average sum of probabilities
in balls of diameter $\eps$ is around $C_1/\eps^{\gamma}$, which would also satisfy the property
2 (Lemma~\ref{lemma2}) for $C_2\approx C_1/\eps^\gamma$.
In general, the constant sum of probabilities, around $C_1$, of stations in $B(v,1)$ does not exclude
that the sum of probabilities in $B(v,\eps/2)$ is still very small.\footnote{%
Consider for example stations $v_1,\ldots,v_n$ on a line,
where the distance between $v_i$ and $v_{i+1}$ is $1/2^i$, for $1\le i<n$.}
Hence, our intuitive goal at this stage would be to distinguish those regions (balls of diameter
$\eps$) in which the sum
of probabilities exceeds (say) half of the average from those where it is much smaller.

The main difficulty is to sense the actual sum of probabilities in $B(v,\eps/2)$,
without possibility of filtering out messages received from distance larger than $\eps/2$
(as there is no geolocation).
Imagine 
that stations in a network are uniformly distributed and the probabilities
have reached such values that the sums in a unit ball are close to $C_1$.
If one replaces $p_v$ with $c_\eps p_v$ for large enough $c_\eps$ (depending on the
growth parameter of the metric, e.g., $c_\eps\approx 1/\eps^2$ in the Euclidean plane),
then the average sum of probabilities in an
$\eps$-ball
is around $C_1$.
%
Hence, an ``average'' station still receives the number of messages similar to those received
with
the original probabilities $p_v$.
%
If, however, the stations are not uniformly distributed, it is still the case that stations in the
smallest ball with sum of probabilities at least $C_1$ could receive many message
after probabilities $p_v$ are scaled up to $c_\eps p_v$.
On the other hand, the situation in a (very) sparse
$\eps$-ball $B$
is as follows
(by ``sparse
$\eps$-ball'' we mean a ball
with sum of probabilities much smaller than the average):
\begin{itemize}
\item
The probability that $v\in B$ receives a message from other station from $B$ is small
(the sum is so small that usually no one is transmitting);

\item
The probability that $v\in B$ receives a message from $u\not\in B$ is small as well
(the sum in $B(v,1)$ is as large after scaling up by constant $c_\eps$ that usually the 
interference
prevents any successful transmission from distance larger than $\eps/2$).
\end{itemize}
Using this idea, our coloring algorithm works as follows.
Procedure DensityTest verifies whether the sum of $p_v$'s in a unit-disk around
a station $v$ is close to $C_1$
(i.e., whether $v$ receives many messages
when transmissions occur with probabilities $p_v$).
%
If it is the case, procedure Playoff verifies if the density in close proximity of $v$
is large (i.e., whether $v$ still receives many messages when probabilities
are scaled up).
In the case of positive outcomes of both procedures, $v$ is switched off, which decreases the sum
of probabilities in the unit disk around $v$.
If repeated sufficient number of times, Playoffs allow to preserve property 1 (Lemma~\ref{lemma1}),
i.e., prevents the sums of probabilities in unit disks from going above $C_1$.
On the other hand, as the positive result of Playoff cannot happen in (very)
sparse areas while probabilities of stations (if not switched off) grow up
to the constant $p_{\max}$, the property 2  (Lemma~\ref{lemma2}) is preserved 
at the end of an execution of StabilizeProbability.



\begin{algorithm}[H]
  \begin{algorithmic}[1]
    \Procedure{DensityTest}{$v$}
      \For{$c_0\log n$ rounds} transmit with prob. $p_v$ \EndFor
      \If{received at least $c_1\log n$ messages} return True
      \Else{ return False}
      \EndIf
    \EndProcedure
  \end{algorithmic}
\end{algorithm}

\begin{algorithm}[H]
  \begin{algorithmic}[1]
    \Procedure{Playoff}{$v$}
      \For{$c_2\log n$ rounds} transmit with prob. $p_v\cdot c_{\varepsilon}$ \EndFor
      \If{received at least $c_3\log n$ messages} return True
      \Else{ return False}
      \EndIf
    \EndProcedure
  \end{algorithmic}
\end{algorithm}

\begin{algorithm}[H]
  \begin{algorithmic}[1]\label{alg:stab}
  \caption{StabilizeProbability($v$)}
    \State $p_v \leftarrow p_{start}$\Comment{$p_{start}=C_1/(2n)$}
    \State $p_{max} \leftarrow C_2/c_\eps$
    \While{$p_v < p_{max}$}
    \For{$c'$ times}
      \If{DensityTest($v$) and Playoff($v$)}
	\State $v$ quits with color $p_v$
      \EndIf
    \EndFor
    \State $p_v \leftarrow 2p_v$
    \EndWhile
    \State $v$ quits with color $2p_{max}$
  \end{algorithmic}
\end{algorithm}

\subsection{DensityTest}\labell{sec:density}
In this section we fix the constants $C_1$, $c_0$ and $c_1$ and state
properties of DensityTest which are satisfied for this choice of
constants.





From now on assume that $C_1$ is any value such that the bounded density property holds with such $C_1$ that the effective communication property is satisfied as well.
The goal is to choose $c_0$ and $c_1$ such that the probability of receiving a successful
transmission is 
around $c_1/c_0$
in a unit ball with the sum close to $C_1/2$ and
it is much lower than $c_1/c_0$ if the sum is significantly smaller than $C_1/2$.
As possibility of sensing stations in distance close to $1$ heavily depends on network
topology and because of some technical reasons, the actual properties (provably) guaranteed
by our choice of $c_0$ and $c_1$ will be a bit different from this intuitive goal.

\begin{proposition}
\labell{prop:density}
Assume that bounded density property is satisfied with
the parameter $C_1$ guarantying the effective communication property.
Then, one can choose $c_0, c_1$ and $c_d$ such that, for every node $v$, 
the following properties are satisfied:
\begin{enumerate}
\item[(1)]
Let $y=\chi(1/6,1)$ be the number of balls of radius $1/6$ sufficient to cover a unit ball.
If
$1/2\geq \sum_{w\in B(v,2/3)}p_w\geq C_1/(2y)$
then the routine DensityTest$(v)$ returns True whp.
%
\item[(2)]
There exists a constant $c_d$, such that if
$\sum_{w\in B(v,1)}p_w< C_1 c_d$
then the routine DensityTest$(v)$ returns False with high probability.
\end{enumerate}
\end{proposition}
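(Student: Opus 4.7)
The plan is to analyze the count $X$ of messages received by $v$ during the $c_0\log n$ transmission rounds of \textsc{DensityTest}. Because the rounds are independent and each round's reception event is a Bernoulli trial with probability $q$ determined by the current $p_w$'s, $X$ is a sum of independent indicator variables, so Chernoff bounds will deliver the whp guarantees once I bound $q$ in each regime.

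For part (1) I would lower-bound $q$. Writing $s=\sum_{w\in B(v,2/3)}p_w$ and $s'=s-p_v$, I would first arrange the constants so that $p_{\max}\le C_1/(4y)$; the hypothesis $s\ge C_1/(2y)$ then yields $s'\ge C_1/(4y)$. Applying Fact~\ref{f:sum} to $A=B(v,2/3)\setminus\{v\}$ gives probability $\ge s'/2$ that exactly one station in $A$ transmits, and this event is independent of ``$v$ itself stays silent'' (probability $\ge 1/2$ since $p_v\le p_{\max}\le 1/2$). Conditioned on these, Fact~\ref{fact7} (whose proof already accounts for the global interference under bounded density) guarantees that $v$ hears that lone nearby transmitter with probability $\ge 1/2$. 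Multiplying, $q\ge s'/8\ge C_1/(32y)$, a positive constant, so $E[X]=\Omega(\log n)$, and a lower-tail Chernoff gives $X\ge c_1\log n$ whp, provided $c_1<q\,c_0/2$ and $q\,c_0$ is large enough to push the failure probability below $1/n^4$.

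For part (2) I would upper-bound $q$. Any successful reception at $v$ requires a transmitter inside $B(v,1)\setminus\{v\}$: at distance $d>1$, even a lone transmitter with zero interference delivers $SINR\le P/(\cN d^\alpha)<\beta$, since $P=\cN\beta$. A union bound then gives $q\le \sum_{w\in B(v,1)\setminus\{v\}}p_w<C_1 c_d$, hence $E[X]<C_1 c_d c_0\log n$. An upper-tail Chernoff forces $X<c_1\log n$ whp, provided $c_1\ge 6 C_1 c_d c_0$, which I would enforce by choosing $c_d$ sufficiently small.

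The main obstacle is choosing $c_0, c_1, c_d$ (together with the auxiliary constraint $p_{\max}\le C_1/(4y)$) consistently. Setting $q_1:=C_1/(32y)$ from part (1), I would take $c_0$ with $q_1 c_0\ge 32$, set $c_1=q_1 c_0/2$, and pick $c_d\le q_1/(12C_1)$, so that $6C_1 c_d c_0\le q_1 c_0/2=c_1$; all three constraints are mutually compatible because $C_1$ and $y$ are already fixed by Fact~\ref{fact7}, and $p_{\max}=C_2/c_\varepsilon$ is determined later with enough slack to absorb the $\le C_1/(4y)$ bound. Everything else is routine Chernoff bookkeeping, and the subtle point to double-check is precisely the exclusion of $v$'s own probability from $s$, which motivates the $p_{\max}\le C_1/(4y)$ constraint.
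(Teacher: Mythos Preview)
Your approach is essentially the paper's: lower-bound the per-round reception probability in (1) via Fact~\ref{f:sum} combined with the effective communication property, upper-bound it in (2) by the union bound over transmitters in $B(v,1)$ (nothing outside range~$1$ can be heard), and finish with Chernoff in both directions. The paper sets $c_1/c_0\approx C_1/(16y)$ and then $c_d<1/(16y)$, matching your choices up to constants.

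The one point of divergence is your treatment of the event ``$v$ itself is the unique transmitter in $B(v,2/3)$.'' You exclude $v$ from the sum, which forces the auxiliary constraint $p_{\max}\le C_1/(4y)$; this constraint is not among the constants $c_0,c_1,c_d$ that the Proposition promises to fix, so strictly speaking you are proving a slightly different statement (even if the constraint is compatible with the later choice of $p_{\max}$). The paper avoids this entirely by adopting the convention that a station hears its own transmission (stated explicitly in the proof of Lemma~\ref{l:receiving}), so it defines ``success'' as ``$v$ receives \emph{or sends} a message'' and gets the lower bound $\tfrac12\cdot s/2$ directly from Fact~\ref{f:sum} without subtracting $p_v$. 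Adopting that convention would let you drop the extra constraint and match the paper exactly.
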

The above proposition shows that DensityTest gives an opportunity to distinguish
areas with large sums of probabilities from those with 
much
smaller sums of probabilities.

\def\ProofPropDensity{
Let success in a round of DensityTest($v$) means that $v$ successfully receives or
sends a message in that round.

\noindent(1)~Let $C'_1=C_1/(2y)$.
By the effective communication property, the probability of success is at least
$\frac12\cdot \sum_{w\in B(v,2/3)}p_w/2\geq C'_1/4$ (cf.\ Fact~\ref{f:sum}). If one chooses
large enough $c_0$ and $c_1$ such that $c_1/c_0\leq \frac12\cdot C'_1/4$, then
the probability of success in a round is at least $2\cdot\frac{c_1}{c_0}$.
And, using a standard Chernoff bound, the result of DensityTest is True whp.

\noindent(2)~For $c_0$ and $c_1$ chosen before, we adjust the constant $c_d$
such that the second claim is satisfied. As a station cannot receive a message
from distance larger than $1$, the probability of success 
is not larger than the probability that at least one station from $B(v,1)$
is transmitting in a round. This probability is bounded from above by
$\sum_{w\in B(v,1)}p_w$. Therefore, if
\begin{equation}\label{e1}
\sum_{w\in B(v,1)}p_w<\frac12\cdot\frac{c_1}{c_0}
\end{equation}
then the result of DensityTest is False whp, by a Chernoff bound.
As 
$c_1/c_0\approx \frac12\cdot C'_1/4=C_1/(16y)$, the
claim holds for $c_d<1/(16y)$.
} 

\ProofPropDensity


\def\lemmaone{
\subsection{Proof of Lemma~\ref{lemma1}}
\labell{sec:lemma1}

\comment{
To guarantee that the sum of probability never exceed $C_1$ in every unit ball, we analyze the upper bound on every ball
of radius $\frac{2}{3}$. Since by the following Lemma we bound the probability mass in every ball with radius $\frac{2}{3}$ by
$C_1' = C_1/\chi(\frac{2}{3},1)$ we get the bound $C_1$ on a unit ball by covering it with $\chi(\frac{2}{3},1)$ smaller balls.
}

In this section we prove Lemma~\ref{lemma1}, assuming that $C_1,$ $c_0$ and $c_1$ are the constants
satisfying properties stated in Proposition~\ref{prop:density}.
Moreover, we determine values of $c_2$ and $c_3$, which depend on $c_\eps$ (and $c_\eps$ can be arbitrary at this stage).

Recall that an execution of lines 4--7 of StabilizeProbability is called a {\em phase}.
As the initial probabilities are set to $p_{\text{start}}$, we have $\sum_{w\in B(v,1)}p_w\leq C_1/2$ for each $v$ at the beginning of the algorithm.
Therefore, it suffices to show that 
$\sum_{w} p_w\leq C_1$ at the end of a phase,
provided the same inequality is satisfied at the beginning of this phase, where the sum is taken over $w$ that are active in a given round.
As the probabilities of active stations are multiplied by $2$ at the
end of each phase, the above condition for correctness of Lemma~\ref{lemma1} 
can be deduced from the following lemma.

%

\begin{lemma}
\labell{l:turningoff}
For every $v$, if before ``For $c'$ times'' loop the following inequality holds
$$\sum_{\substack{w \text{ is active} \\ w\in B(v,1)}} p_w < C_1$$ then after the loop
$$\sum_{\substack{w \text{ is active} \\ w\in B(v,1)}} p_w < \frac{C_1}{2}.$$
\end{lemma}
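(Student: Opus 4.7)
My plan is to show that, in each iteration of the ``For $c'$ times'' loop that begins with active mass $\sum_{w\in B(v,1)} p_w \ge C_1/2$, a constant amount $\delta = \delta(\eps,\gamma) > 0$ of probability mass gets switched off from $B(v,1)$ whp. Choosing $c' = \lceil C_1/(2\delta)\rceil + 1 = O(1)$ then forces the mass below $C_1/2$ by the end of the loop.

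The core is a two-level covering and pigeonhole argument. Suppose at the start of some iteration the active mass in $B(v,1)$ is still at least $C_1/2$. First cover $B(v,1)$ with $y = \chi(1/6,1) = O(1)$ balls of radius $1/6$; one of them, say $B(v^\star,1/6)$, carries active mass at least $C_1/(2y)$. Next cover $B(v^\star,1/6)$ by $K_\eps = \chi(1/6,\eps/6) = O((1/\eps)^\gamma)$ balls of radius $\eps/6$ and pick a sub-ball $B(v^\bullet,\eps/6)$ with active mass at least $C_1/(2yK_\eps)$. Every active station $u \in B(v^\bullet,\eps/6)$ then satisfies $B(v^\star,1/6) \subseteq B(u,2/3)$ and $B(v^\bullet,\eps/6) \subseteq B(u,\eps/3)$, hence
\[
\sum_{w\in B(u,2/3)} p_w \;\ge\; \frac{C_1}{2y}, \qquad \sum_{w\in B(u,\eps/3)} p_w \;\ge\; \frac{C_1}{2yK_\eps}.
\]

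Next I invoke the two test procedures on each such $u$. By Proposition~\ref{prop:density}(1), the lower bound on $\sum_{B(u,2/3)} p_w$ yields DensityTest$(u) = $~True whp, the upper bound $\le 1/2$ being maintained via the hypothesis $\sum_{B(u,1)} p_w < C_1$ once $C_1 \le 1/2$. For Playoff$(u)$ one proves an analog of Proposition~\ref{prop:density}(1) adapted to the scaled probabilities $p_v c_\eps$: choosing $c_2, c_3$ and $c_\eps$ so that $c_\eps \cdot C_1/(2yK_\eps)$ exceeds the per-round reception threshold (while keeping the scaled interference from $B(u,1)$ bounded well enough that close-in transmissions survive SINR) yields Playoff$(u) = $~True whp. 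Consequently every active station in $B(v^\bullet,\eps/6)$ quits during this iteration whp, and the mass removed from $B(v,1)$ is at least $\delta := C_1/(2yK_\eps)$.

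Iterating, as long as the active mass in $B(v,1)$ has not yet dropped below $C_1/2$ the above removal argument applies to that iteration; since the mass is nonincreasing along the loop, a union bound over the $c'$ iterations completes the proof. The chief obstacle is formalising the Playoff analog of Proposition~\ref{prop:density}(1): scaling by $c_\eps$ may push the bounded density of $B(u,1)$ above the threshold required by Fact~\ref{fact7}, so effective communication cannot be invoked verbatim. The fix is a direct reception estimate attributing successful receptions to single transmitters from $B(u,\eps/3)$ and bounding the scaled-network interference via the bounded-growth property together with $\alpha>\gamma$. Since the lower bound $c_\eps \gtrsim 1/(C_1\eps^\gamma)$ required for the close-in transmissions to fire is compatible with the upper bound $c_\eps \lesssim 1/(C_1\eps^\alpha)$ needed for a close-in transmission to overcome the interference (by $\alpha>\gamma$), such a calibration exists; carrying it out while respecting the further constraints that Section~\ref{sec:lemma2} places on $c_\eps$ is the delicate step.
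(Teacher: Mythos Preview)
Your reduction of Lemma~\ref{l:turningoff} to a per-iteration mass-removal estimate is exactly the paper's strategy (there via Lemma~\ref{l:playoff}), and your DensityTest argument via the $1/6$-ball covering matches condition~(3) of the paper's Lemma~\ref{l:playoff2}. The gap is entirely in the Playoff half, and it is not merely ``delicate''.

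Your two-level covering picks the densest $\eps/6$-sub-ball and records only the pigeonhole \emph{lower} bound $m \ge C_1/(2yK_\eps)$ on its unscaled mass; there is no matching upper bound, so $m$ can be anything up to $C_1$. After scaling by $c_\eps$ the mass in that ball lies somewhere in $[c_\eps C_1/(2yK_\eps),\,c_\eps C_1)$, and no choice of $c_\eps$ in your window $[\Theta(1/(C_1\eps^\gamma)),\Theta(1/(C_1\eps^\alpha))]$ forces this to be $O(1)$ for every distribution. Concretely, if the mass of $B(v^\star,1/6)$ is concentrated in a cluster of radius $\ll \eps/6$, the scaled mass there is $\gg 1$, so the probability that exactly one station in $B(v^\bullet,\eps/6)$ transmits is exponentially small, and with several simultaneous transmitters at comparable distances the SINR at $u$ is $O(1/\text{count})<\beta$. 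Your proposed ``direct reception estimate'' via bounded-growth layering does not help: that technique bounds interference from annuli only once you control the mass of \emph{every} $r$-ball at the working scale, and you have no such control below scale $\eps/6$.

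This is precisely what the paper's Lemma~\ref{l:playoff2} is built to overcome. Rather than a fixed second scale, the paper runs a geometric descent: starting from a ball of radius $r_0=(2q/(c_\eps C_1))^{1/\gamma}$ with scaled mass $\ge q$, it tests whether every $r_0$-ball within distance $br_0$ has scaled mass $\le z^\gamma q$; if not, it jumps to a denser ball of radius $r_0/z$ and repeats. The descent must terminate (at the latest when $r$ drops below half the minimum inter-station distance), yielding a ball $D_0$ with scaled mass in $[q,1/2]$ \emph{and} with per-$r$-ball density $\le z^\gamma q$ throughout the surrounding annulus $D_1\cup D_2$. Only with both the upper bound on $D_0$ and this surrounding density control can Lemma~\ref{l:receiving} bound each of the four events $E_1,\dots,E_4$ by a constant. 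Note also that the descent may drift by up to $2br_0\le 1/3$, which is why the paper lands $S\subseteq B(v,4/3)$ rather than $B(v,1)$ and sets $c'=\chi(1,4/3)\,C_1 c_\eps/q$ accordingly; your direct removal from $B(v,1)$ would be cleaner if it worked, but it does not.
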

The remaining part of this section is devoted to the proof of Lemma~\ref{l:turningoff},
which in turn follows from the following property, provided $c'$ is chosen large enough.

\begin{lemma}
\labell{l:playoff}
There exists a constant $q$ which satisfies the following statement.
For every $v$ if $\sum_{w\in B(v,1)} p_w \ge C_1/2$ then there is a set of stations $S \subseteq B(v,4/3)$ such that
$\sum_{w\in S} p_w c_\eps \ge q$ and every station from $S$ is turned off
in line 6
 of the algorithm whp.
\end{lemma}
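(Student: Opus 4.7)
The plan is to let $S$ be the active stations sitting in a densest small sub-ball of $B(v,1)$ and then show that every $u\in S$ survives both sub-routines (so it executes line 6) whp. Using bounded growth, cover $B(v,1)$ by $y=\chi(1,1/6)=O(1)$ balls of radius $1/6$; by pigeonhole some such ball $B_0$ has total mass $\sum_{w\in B_0}p_w\ge C_1/(2y)$, and clearly $B_0\subseteq B(v,4/3)$. Take $S=B_0\cap A$, so $\sum_{w\in S}p_w c_\eps\ge c_\eps C_1/(2y)=:q$, the constant promised. For any $u\in S$, every point of $B_0$ lies within distance $1/3$ of $u$, hence $\sum_{w\in B(u,2/3)}p_w\ge\sum_{w\in B_0}p_w\ge C_1/(2y)$; and the phase invariant $\sum_{w\in B(u,1)}p_w\le C_1\le 1/2$ (choosing $C_1$ small) supplies the matching upper bound. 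Proposition~\ref{prop:density}(1) therefore yields that DensityTest$(u)$ returns True whp.

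It remains to show that Playoff$(u)=$ True whp for every $u\in S$. Since Playoff runs for $c_2\log n$ rounds, it suffices by a Chernoff bound to lower-bound by $2c_3/c_2$ the probability that $u$ receives a message in a single scaled round (when every $w$ transmits with probability $p_w c_\eps$). I would estimate this via the event $\m{E}_u$ that exactly one station of $B(u,1/3)$ transmits and $u$ decodes its signal. The scaled mass satisfies
\begin{equation*}
c_\eps C_1/(2y)\;\le\;\sum_{w\in B(u,1/3)} p_w c_\eps\;\le\;O(c_\eps C_1),
\end{equation*}
the upper bound coming from $B(u,1/3)\subseteq B(u,1)$ and the invariant. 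Choosing $c_\eps$ (and the admissible $C_1$) so that this scaled mass lies in $[\Omega(1),1/2]$, Fact~\ref{f:sum} gives a constant probability that a unique transmitter $w\in B(u,1/3)$ appears. Conditioned on that, the signal at $u$ is at least $P/(1/3)^\alpha$, while the expected interference decomposes into the annulus $B(u,2/3)\setminus B(u,1/3)$ and a tail controlled by the $\alpha>\gamma$ computation used in Fact~\ref{fact7}, both of which scale as $O(c_\eps C_1)$ times an absolute constant. A Markov bound then forces SINR to exceed $\beta$ with constant probability, giving the required per-round success probability. A final union bound over $u\in S$ and the $O(\log n)$ rounds of each sub-routine propagates everything whp.

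The main obstacle is the joint calibration in the previous paragraph: $c_\eps$ must be large enough that the hotspot $B_0$ becomes genuinely active after scaling---so that with constant probability a unique transmitter appears in $B(u,1/3)$---yet small enough (together with a correspondingly small $C_1$) that the network-wide amplification of interference does not prevent $u$ from decoding at range $1/3$. The constants $c_2,c_3$ for Playoff are then set so that $c_3/c_2$ is comfortably smaller than the per-round success probability obtained in this analysis, which is exactly the role that the final remarks of Section~\ref{sec:lemma1} reserve for them.
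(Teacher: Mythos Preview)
Your argument has a real gap at the calibration step, and it is not a matter of tuning constants. You want the scaled mass $\sum_{w\in B(u,1/3)} p_w\,c_\eps$ to lie in $[\Omega(1),1/2]$ so that Fact~\ref{f:sum} applies, and you propose to arrange this by choosing $c_\eps$ and $C_1$ jointly. But $c_\eps$ is not free here: the proof of Lemma~\ref{lemma2} (Proposition~\ref{prop:ceps}) forces $c_\eps=\Theta\!\bigl(1/(\eps^\alpha C_1 c_d)\bigr)$, so that $c_\eps C_1=\Theta(1/\eps^\alpha)$, which is arbitrarily large as $\eps\to 0$. With that value of $c_\eps$, your hotspot $B_0$ of fixed radius $1/6$ may carry scaled mass $\gg 1$; the probability of a unique transmitter in $B(u,1/3)$ then decays like $c_\eps C_1\,e^{-\Omega(c_\eps C_1)}$, and the expected interference from the annulus $B(u,2/3)\setminus B(u,1/3)$ is itself $\Theta(c_\eps C_1\cdot P)$ at the same range as the intended signal, so Markov gives no useful bound. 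The ``obstacle'' you name in your last paragraph is therefore not a calibration issue but a structural one: a fixed-radius hotspot cannot serve as $S$ once $c_\eps$ is large.

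The paper resolves this by letting the radius shrink with $c_\eps$. It searches for a ball $D_0=B(x,r)$ with $r\le(2q/(c_\eps C_1))^{1/\gamma}$ whose \emph{scaled} mass lies in $[q,1/2]$ and, crucially, such that every nearby $r$-ball has scaled mass at most $z^\gamma q$ (Lemma~\ref{l:playoff2}). This is found by a zoom-in argument: start from the densest $1/6$-ball (your $B_0$), and whenever a nearby $r$-ball is too heavy, pass to a sub-ball of radius $r/z$ inside it still carrying mass $\ge q$; the process terminates because radii shrink geometrically. Because $r$ is tiny, the interference threshold for hearing at distance $2r$ is $\cI_0=\Theta(r^{-\alpha})$, and this is what absorbs the $c_\eps$-amplified interference in Lemma~\ref{l:receiving}. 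Your $B_0$ does play a role---it is the starting point of the search and supplies condition~(3) of Lemma~\ref{l:playoff2}, which is exactly what makes DensityTest pass---but it cannot itself be the set $S$.
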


Now, we prove Lemma~\ref{l:turningoff} assuming correctness of Lemma~\ref{l:playoff} and then we give the proof
of Lemma~\ref{l:playoff}.
By Lemma~\ref{l:playoff} the sum of probabilities in $B(v,4/3)$
will decrease by at least $q/c_\eps$ in every iteration of the ``For $c'$ times'' loop. The maximal sum of probabilities in $B(v,4/3)$ is at most $\chi(1,\frac{4}{3}) C_1$.
Thus by performing $c' =\chi(1,\frac{4}{3}) C_1 c_\eps / q$ iterations we have the  ''opportunity'' to reduce all the
probabilities in $B(v,4/3)$, as long as $\sum_{w\in B(v,1)}p_w\ge C_1/2$.
Thus after $c'$ iterations of the loop we have $\sum_{w\in B(v,1)}p_w < C_1/2$.


It remains to prove Lemma~\ref{l:playoff}; the proof is presented in the following part
of this section.
As we mainly analyze Playoff below, where stations transmit with probabilities scaled up by the
factor $c_\eps$, we use the notion of {\em mass of probability}
of some set of stations $A$ as $c_\eps\sum_{w\in A}p_w$.

The proof of Lemma~\ref{l:playoff} requires to show that, close to each dense unit ball, a group of stations $S$
with probability mass $\geq q$ exists,
for which DensityTest and Playoff return true whp.
%
The main effort in the proof is in ensuring that 
the elements of $S$ can hear a message
with constant probability $p$ in each round of Playoff (i.e., when
the probabilities are scaled up by $c_\eps$) and with probability
$\geq c_1/c_0$ in each round of DensityTest (i.e., with ``standard''
probabilities $p_v$'s). 
(When this property
is shown, one can adjust the value $c_3/c_2$ to $p$.)
We show existence of such $S$ by first proving that, in neighborhood of a dense
unit ball, a ball with probability mass $\geq q$, center $x$ and of radius $r$ exists,
which satisfies the following properties for some $b$ (Lemma~\ref{l:playoff2}):
\begin{enumerate}
\item[(a)]
the probability mass of each ball of radius $r$ inside $B(x,br)$ is at
most $z^\gamma q$;
\item[(b)]
$b$ is large enough to guarantee that the number of balls of radius $r$
necessary to cover $B(x,br)$ is such that they can accumulate the whole
probability mass $C_1$ of the unit ball, provided each of them has the
(maximal) mass $z^\gamma q$, where $z>2$ is some constant;
\item[(c)]
$br\leq 1/6$;
\item[(d)]
the sum of probabilities of stations in the ball of radius $2/3$ concentric with $B$
is $\geq C_1/(2y)$, where $y=\chi(1/6,1)$.
\end{enumerate}
Using the above properties, the chances of receiving a message by a station
during Playoff are estimated in Lemma~\ref{l:receiving}. Then, $c_2$ and $c_3$ are chosen
appropriately, to assure that Playoff returns true whp in $S$. On the other hand, (d) above
guarantees that DensityTest returns true whp in $S$.


Before stating the following technical lemma sketched by (a)-(d) above, we estimate the value of
$b$ satisfying the condition (b). It is sufficient that $b^{\gamma}z^\gamma q\geq C_1 c_\eps$ which
means that $b\approx \frac1{z}(C_1c_\eps/q)^{1/\gamma}$ is suitable.

\begin{lemma}
\labell{l:playoff2}
 For every $v$, whenever $\sum_{w\in B(v,1)} p_w \ge C_1 / 2$, there exists $x$ such that $B(x,r)\subseteq B(v,4/3)$ and $r\le (\frac{2q}{c_\eps C_1})^{1/\gamma}$ such that if we denote
 \begin{enumerate}
   \item[] $D_0=B(x,r)$
   \item[] $D_1=B(x,ar)\setminus D_0$ for some $a\leq b$
   \item[] $D_2=B(x,br)\setminus (D_1\cup D_0)$
\end{enumerate}
then
 \begin{enumerate}
\item[(1)]
The mass of probability in $D_0$ is at least $q$ and at most $1/2$.
\item[(2)]
For every $x'\in D_1\cup D_2$ the mass of probability in $B(x',r)$ is bounded: $\sum_{w\in B(x',r)} p_w \le z^\gamma q$.
\item[(3)]
For all $w\in D_0$ we have $\sum_{u\in B(w,\frac{2}{3})} p_u \ge C_1/(2 \chi(\frac{1}{6},1))$.
\end{enumerate}
\end{lemma}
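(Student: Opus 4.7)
The plan is to find $D_0$ as a smallest-radius ball with probability mass at least $q$, anchored inside the densest sub-region of $B(v,1)$; minimality of the radius together with bounded growth will then control the density of nearby balls (yielding (1) and (2)), while proximity to that sub-region will yield (3). Since the intermediate radius $ar$ plays no role in the conclusions, $a$ may be taken as any value in $[0,b]$ and will be fixed in the proof of Lemma~\ref{l:receiving}.

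First, I would extract a dense core. Covering $B(v,1)$ by $y=\chi(1/6,1)$ balls of radius $1/6$ (supplied by bounded growth), pigeonhole on the hypothesis $\sum_{w\in B(v,1)} p_w\geq C_1/2$ delivers a ball $B(u,1/6)$ with $\sum_{w\in B(u,1/6)} p_w\geq C_1/(2y)$. Property~(3) is then automatic whenever $D_0\subseteq B(u,1/2)$: for any such $w$ the triangle inequality gives $B(u,1/6)\subseteq B(w,2/3)$, so $\sum_{u'\in B(w,2/3)} p_{u'}\geq C_1/(2y)$. I therefore restrict the search for $x$ to $B(u,1/2-r)$.

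Second, I would take $r$ to be the infimum of those $r'>0$ for which some ball $B(x',r')\subseteq B(u,1/2)$ has mass at least $q$. Applying bounded growth and pigeonhole to $B(u,1/6)$ (with the paper's normalisation that the growth constant equals one) gives a covering by roughly $(6r)^{-\gamma}$ balls of radius $r$ and hence a sub-ball of mass at least $c_\eps C_1 r^\gamma/2$, which exceeds $q$ as soon as $r^\gamma\geq 2q/(c_\eps C_1)$; this yields the stated bound $r\leq (2q/(c_\eps C_1))^{1/\gamma}$. For the upper bound in~(1) and for~(2), I would cover any ball $B(x',r)$ of interest by at most $z^\gamma$ balls of radius $r/z$ (again from bounded growth); by minimality of $r$ each such covering ball has mass strictly less than $q$, so $B(x',r)$ has mass strictly less than $z^\gamma q$. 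Choosing $q$ small enough that $z^\gamma q\leq 1/2$ yields the upper half of~(1), and the same calculation applied to $x'\in B(x,br)\setminus D_0$ gives~(2).

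The main technical hurdle is making minimality transfer to the neighbouring balls: the covering balls of radius $r/z$ used for $B(x',r)$ with $x'\in B(x,br)$ must remain inside $B(u,1/2)$ so that the infimum defining $r$ applies to them. Since $r$ is small, this requirement reduces to $br\leq 1/6$ up to lower-order terms, which is sketch condition~(c). Combined with sketch condition~(b), $b^\gamma z^\gamma q\geq c_\eps C_1$, and substituting the upper bound on $r$, both constraints become $br\leq 2^{1/\gamma}/z$, satisfied once $z$ is taken to be a sufficiently large constant depending only on $\gamma$. With $z$ fixed, $q$ and $b$ are then chosen to fulfil the remaining constraints $z^\gamma q\leq 1/2$ and $q\leq c_\eps C_1/(2y)$.
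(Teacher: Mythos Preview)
Your one-shot infimum argument has a genuine gap, precisely at the ``technical hurdle'' you flag but do not actually resolve. You define $r$ as the infimum over all balls $B(x',r')\subseteq B(u,1/2)$ with mass at least $q$, and then claim that the containment requirement for the covering balls ``reduces to $br\le 1/6$''. That would be true only if the minimising centre $x$ were known to lie in $B(u,1/6)$ (or thereabouts). But your infimum ranges over the whole of $B(u,1/2)$, so the optimal centre may sit at distance $1/2-r$ from $u$; then for $x'\in B(x,br)$ the balls of radius $r/z$ covering $B(x',r)$ reach out to distance $1/2+br+r/z>1/2$ from $u$, and minimality tells you nothing about them. Conversely, if you take the infimum globally (so that minimality applies everywhere), the minimiser may lie in a dense cluster far from $v$, destroying both~(3) and the containment $B(x,r)\subseteq B(v,4/3)$. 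The pigeonhole step gives you a ball of radius $r_0$ with centre near $u$ and mass $\ge q$, but it does \emph{not} give you such a ball of the (possibly much smaller) infimum radius $r$; so you cannot simply relocate $x$ after the fact.

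The paper avoids this tension by making the argument iterative rather than one-shot. It starts from a ball $B_0=B(x_0,r_0)$ inside the densest $1/6$-ball $B$ and, whenever condition~(2) fails at stage $i$, passes to a witnessing ball of mass $\ge z^\gamma q$ within distance $br_i$ and extracts from it (via bounded growth) a sub-ball $B_{i+1}$ of radius $r_i/z$ and mass $\ge q$. The radii shrink geometrically, so the total drift of the centre is at most $b r_0\sum_{i\ge 0}z^{-i}\le 2br_0\le 2/z$, which for $z\ge 6$ keeps the final centre within $1/3$ of $x_0$ and hence well inside $B(u,2/3)$, securing~(3). Termination is guaranteed because once $br_i$ drops below half the minimum inter-station distance, $B(x_i,br_i)$ contains at most one station and (2) holds vacuously. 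In other words, the paper's construction simultaneously controls the radius \emph{and} the location; your infimum controls only the radius.
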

\begin{proof}(of Lemma~\ref{l:playoff2})
Let $v$ be an arbitrary vertex such that
$\sum_{w\in B(v,1)} p_w \ge C_1 / 2$.
Let $B$ be a ball with radius $1/6$ included in $B(v,1)$
with the largest mass of probability.
Thus, $\sum_{w\in B}p_w\geq C_1/(2y)$
(cf.\ Proposition~\ref{prop:density}),
where $y=\chi(1/6,1)$.
Observe that, if
$x$ located in a ball of radius $2/3$ concentric with $B$
satisfies (1) and (2), then (3) is satisfied for $x$ as well.
Therefore, the idea of our proof is to start looking for
$x$ satisfying (1)--(3) in $B$, as defined above (see Fig.~\ref{fig:kratka}).

\begin{figure}
\begin{center}
\epsfig{file=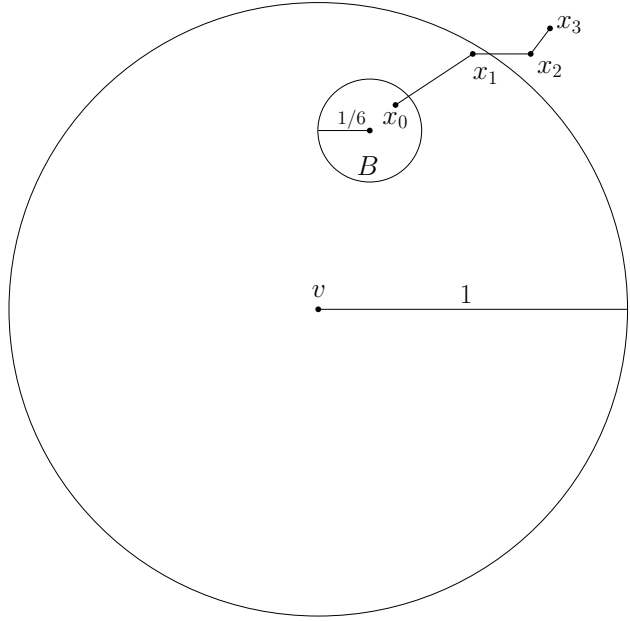, scale=0.6}
\end{center}
\vspace*{-3ex}
\caption{%
An illustration for the proof of Lemma~\ref{l:playoff2}.
The distances $\dist(x_i,x_{i+1})$ form a geometric sequence
$br, br/z, br/z^2,\ldots$ $B$ is the ball
with the largest probability mass among balls of radius $1/6$
included in $B(v,1)$.
}
\label{fig:kratka}
\end{figure}%

Let $r_0$ be a number satisfying the relationship $C_1/2=q/r_0^\gamma$.
That is the average probability mass of a ball of radius $r_0$ in $B(v,1)$ is
at least $q$.
As $B$ is chosen to have the largest probability mass among balls of
radius $1/6$ included in $B(v,1)$, the average probability mass of a ball of radius $r_0$ in $B$
is at least $q$ as well.
Then, $B$ includes a ball $B_0=B(x_0,r_0)$ with probability mass $\geq q$ such
that $r_0=(2q/(C_1 c_\eps))^{1/\gamma}$. If (2) is satisfied for $x_0$, (3) holds as
well by the choice of $B$.
On the other hand, if (2) is not satisfied for $x_0$, there is a ball of radius $r_0$ and
probability mass $\geq z^\gamma q$ in distance at most $br_0$ from $x_0$.
Bounded growth property of the metric guarantees that this ball contains
$B_1$ of radius $r_1=r_0/z$ and probability mass $\geq q$.

One can build in such a way a sequence of balls $B_0, B_1,\ldots$ with probability
mass larger than $q$, such that the radius of $B_i=B(x_i,r_i)$ is $r_i=r_{i-1}/z=r_0/z^i$ for $i>0$,
and the distance between (centers of) $B_i$ and $B_{i+1}$ is $\leq br_0(1+1/z+\cdots+1/z^i)$,
as long as (2) is not satisfied.
%
If (2) is eventually satisfied for some $x_i$ and $r_i=r_0/z^i$, we obtain a ball $B_i$
of radius at most $r_0=(2q/C_1)^{1/\gamma}$ with probability mass $\geq q$, whose
center point is in distance at most $br_0\sum_{i\geq 0} 1/z^i\leq 2br_0$ from $x$, provided
$z>2$.
We call such an event {\em success}.
Thus, 
the circle concentric with $B_i$ of radius $2br_0+1/6$ contains
the ball $B$ with sum of probabilities $\geq C_1/(2\chi(\frac{1}{6},1))$, 
On the other hand, $$br_0\leq \frac1{z}(C_1c_\eps/q)^{1/\gamma}\cdot \left(\frac{2q}{c_\eps C_1}\right)^{1/\gamma}\leq 1/z\leq 1/6$$
for $z\geq 6$.
This implies that $2br_0+1/6\leq 2/3$ and therefore (3) is satisfied
as well for $x_i$ and $r_i$.

It remains to show that {\em success} eventually appears in construction of the above
sequence of balls. Note that if $br_i$ is smaller than half of the smallest distance
between stations, there is at most one station in a ball of radius $br_i$. This in turn
implies that, if the probability mass of $B(x,r_i)$ is nonzero, then
the probability mass of $B(x,br_i)\setminus B(x,r_i)$ is zero.
Therefore (2) is satisfied for $x$ and $r=r_i$.
%
%
\end{proof}


\begin{lemma}
\labell{l:receiving}
  Let $D_0$ be a ball satisfying assumptions of Lemma~\ref{l:playoff2} and conditions (2) and (3) stated in this lemma.
  Then, for every $v\in D_0$ the probability of receiving a message $p(v)$ is at least 
  $q/8\cdot (1/4)^{a^\gamma z^\gamma q}$. 
\end{lemma}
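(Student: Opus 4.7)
\medskip
\noindent\textbf{Plan.} The plan is to isolate, for a fixed $v\in D_0$, a single clean sub-event inside one Playoff round that forces $v$ to receive some message, and to lower-bound its probability as the product of a ``unique nearby transmitter'' factor and an ``acceptable outside interference'' factor. Since any potential transmitter $w\in D_0=B(x,r)$ satisfies $\dist(v,w)\le 2r$, and $r\le(2q/(c_\eps C_1))^{1/\gamma}$ can be made small enough to satisfy $2r\le 2^{-1/\alpha}$ for a suitable choice of constants, Fact~\ref{f:close_int} will apply: if exactly one such $w$ transmits in this round and the cumulative interference at $v$ from transmitters outside $B(x,ar)$ stays below $\cN/(2(2r)^\alpha)$, then $v$ successfully receives $w$.

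\medskip
\noindent\textbf{Unique-transmitter factor.} I would write out
\[
P\bigl(\text{exactly one of $B(x,ar)$ transmits and lies in $D_0\setminus\{v\}$}\bigr)
=\prod_{u\in B(x,ar)}(1-p_uc_\eps)\cdot\sum_{w\in D_0\setminus\{v\}}\frac{p_wc_\eps}{1-p_wc_\eps},
\]
bound the sum below by $q-p_vc_\eps\ge q/2$ (using condition~(1) of Lemma~\ref{l:playoff2} together with the assumption $p_{\max}c_\eps=C_2\le q/2$), and bound the product below via Fact~\ref{f:notransmit} by $(1/4)^{\mathrm{mass}(B(x,ar))}$. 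The mass of $B(x,ar)$ splits as $\mathrm{mass}(D_0)+\mathrm{mass}(D_1)$, with the first summand $\le 1/2$ by condition~(1) and the second bounded by covering $D_1$ with $O(a^\gamma)$ balls of radius $r$ centered in $D_1$ (bounded growth) and applying condition~(2) to each, giving $\mathrm{mass}(D_1)\le a^\gamma z^\gamma q$. Then $(1/4)^{1/2+a^\gamma z^\gamma q}=(1/2)(1/4)^{a^\gamma z^\gamma q}$, so the whole factor is at least $(q/4)\cdot(1/4)^{a^\gamma z^\gamma q}$.

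\medskip
\noindent\textbf{Interference factor.} Transmissions outside $B(x,ar)$ are independent of the event above, so the conditional expected outside-interference at $v$ equals its unconditional value. I would split it into the shell $B(v,1)\setminus B(x,ar)$, where each station sits at distance at least $(a-1)r$ from $v$ and the total mass is at most $C_1c_\eps$ by bounded density, contributing at most $C_1c_\eps P/((a-1)r)^\alpha$; and the outer annuli $B(v,i+1)\setminus B(v,i)$ of mass $O(i^{\gamma-1})C_1c_\eps$ each (bounded growth), whose total contribution is $O(C_1c_\eps P)$ since $\sum_{i\ge 1}i^{\gamma-1-\alpha}$ converges under $\alpha>\gamma$. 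Choosing $a-1=\Theta(c_\eps^{1/\alpha})$ drives the dominant first term below $\cN/(4(2r)^\alpha)$ and, by Markov's inequality, the event that the outside interference is $\le \cN/(2(2r)^\alpha)$ has probability at least $1/2$. Multiplying the two independent factors gives exactly $(q/8)(1/4)^{a^\gamma z^\gamma q}$, as claimed.

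\medskip
\noindent\textbf{Main obstacle.} The delicate balance is on $a$: Step~2 forces $a=\Omega(c_\eps^{1/\alpha})$, while the construction in Lemma~\ref{l:playoff2} requires $a\le b=\Theta(c_\eps^{1/\gamma})$. Compatibility depends crucially on $\alpha>\gamma$, so that $c_\eps^{1/\alpha}=o(c_\eps^{1/\gamma})$, together with $c_\eps$ being chosen sufficiently large in terms of $\alpha,\beta,\gamma,\cN,C_1$. The side conditions $2r\le 2^{-1/\alpha}$ (needed for Fact~\ref{f:close_int}), $p_uc_\eps\le 1/2$ (needed for Fact~\ref{f:notransmit}), and $C_2\le q/2$ must also be enforced by the eventual choice of $p_{\max}$ and $C_2$; each of these is a constant-choice question rather than a structural one.
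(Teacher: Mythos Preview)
Your argument is correct, but it takes a genuinely different route from the paper. The paper decomposes into four independent events $E_1,\dots,E_4$: exactly one transmitter in $D_0$; silence in $D_1$; small interference from $D_2$; small interference from outside $B(x,br)$. The crucial point is $E_3$: the paper exploits condition~(2) of Lemma~\ref{l:playoff2} at scale $r$---each $r$-ball in $D_2$ has mass $\le z^\gamma q$---and layers $D_2$ into annuli of width $r$, getting $E[\cI(D_2)]\le z^\gamma q P r^{-\alpha}\sum_{i\ge a-1}i^{\gamma-\alpha-1}$. This converges for \emph{any} fixed $a$, so the paper simply takes $a=2$ (and $z=6$), obtaining a bound $(q/8)(1/4)^{2^\gamma 6^\gamma q}$ that is a universal constant independent of $c_\eps$ and $\eps$. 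You, by contrast, never use the scale-$r$ density control in $D_2$; you bound the whole shell $B(v,1)\setminus B(x,ar)$ by the coarse unit-ball invariant (mass $\le C_1c_\eps$) concentrated at distance $(a-1)r$, which forces $a=\Theta(c_\eps^{1/\alpha})$. This is where the two approaches diverge.

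What each buys: the paper's route fully uses the machinery of Lemma~\ref{l:playoff2} and keeps $c_3/c_2$ (set to half the bound) independent of $c_\eps$, so the later choice of $c_\eps$ in Section~\ref{sec:lemma2} is decoupled from everything here. Your route is conceptually lighter---no fine layering inside $D_2$---but the resulting bound depends on $c_\eps$, and since $c_\eps$ is itself chosen as a function of $c_2/c_3$, you end up with a fixed-point constraint $c_\eps\sim\ln(c_2/c_3)\sim c_\eps^{\gamma/\alpha}$. This \emph{is} resolvable precisely because $\alpha>\gamma$ (as you note), but it yields a much larger $c_\eps$ and a probability bound that is exponentially small in $1/\mathrm{poly}(\eps)$ rather than an absolute constant. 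Two small remarks: you can drop the side condition $C_2\le q/2$ by allowing $v$ to hear its own transmission (the paper does this explicitly), and condition~(3) is not used in either proof of this lemma---it is only needed later for DensityTest.
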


\begin{proof}
 Now we analyze the probability that every station $v\in D_0$ receives a message. Let $D$ denote the area around $D_0$, and $\cI_0$ be the interference allowing for transmission on a distance $2r$,
 that is $D = D_0 \cup D_1 \cup D_2$ and by Fact~\ref{f:close_int} $\cI_0 = \cN/(2(2r)^\alpha)$.
 We also introduce four events that, when holds at the same time, allows for every station $v\in D_0$ to hear a message.
 Note that a station hears a message transmitted by itself.

\begin{enumerate}
\item[(1)]
$E_1$ - exactly one station from $D_0$ transmits
\item[(2)]
$E_2$ - no station from $D_1$ transmits
\item[(3)]
$E_3$ - interference from $D_2$ is lower than $\cI_0/2$
\item[(4)]
$E_4$ - interference from stations outside $D$ is at most $\cI_0/2$
\end{enumerate}

Observe that, since events are independent, we have 
$$Pr(\text{every }v\in D_0\text{ hears something}) \ge Pr(E_1)Pr(E_2)Pr(E_3)Pr(E_4).$$


\begin{claim}
 $Pr(E_1) \ge q/2$
\end{claim}

\begin{proof}
 Observe that $1/2\ge \sum_{w\in D_0}p_w\ge q$, thus by Fact~\ref{f:sum} we have $Pr(E_1)\ge q/2$.
\end{proof}

\begin{claim}
 $Pr(E_2) \ge (1/4)^{a^\gamma z^\gamma q}$
\end{claim}

\begin{proof}
$D_1$ can be covered by $\chi(r, ar)\le a^\gamma$ balls with radius r,
and by Lemma~\ref{l:playoff2}.2 in every such ball the mass of probability is at most $z^\gamma q$,
thus $\sum_{w\in D_1}p_w \le a^\gamma z^\gamma q$.
Using the inequality from Fact~\ref{f:notransmit} we get $Pr(E_2) \ge (1/4)^{\sum_{w\in D_1}p_w} \ge (1/4)^{a^\gamma z^\gamma q}$.
\end{proof}

We choose such $b$ that we can accommodate all the probability from an unit ball into $D_2$ without violating the condition (2)
Lemma~\ref{l:playoff2}. This allows us for bounding interference in two stages. The first stage is bounding interference
from close stations, and we do it more carefully as the close stations can introduce large noise. Then we bound the interference
from far stations by using the fact that in every unit ball the mass of probability is at most $C_1$.
By the properties of the metric $D_2\cup D_1$ can be covered by $b^\gamma$ balls of radius $r$,
so we choose $b$ as least integer that satisfy $b^\gamma z^\gamma q \ge C_1$. We also set $a=2$ and $z=6$ as a result of the previous observations.

\begin{claim}
\label{c:interf}
 $Pr(E_3) \ge 1/2$
\end{claim}

\begin{proof}
Let $\cI(D_2)$ be the interference generated by the stations from $D_2$.
We split $D_2$ into \emph{layers} $B(x,(a+i+1)r)\setminus B(x,(a+i)r)$. 
Each of them can be covered by $O(i^{\gamma-1})$ balls of radius $r$.
For the sake of clarity we assume in our calculations that the constant hidden behind the $O$ is $1$.

By bounding the interference from each layer inside $D_2$ we get
$$E[\cI(D_2)] \le \sum_{i=a-1}^{b-1}i^{\gamma-1}z^\gamma q P/(ir)^\alpha = z^\gamma q P/r^\alpha \sum_{i=a-1}^{b-1}i^{\gamma-\alpha-1}\le z^\gamma q P/r^\alpha \sum_{i=a-1}^{\infty}i^{\gamma-\alpha-1}$$.

Observe that, since $P=\beta\cN$, by choosing $q\le 1/(z^\gamma \beta 2^{\alpha+2}\sum_{i=a-1}^{\infty}i^{\gamma-\alpha-1})$
we bound the expected interference from $D_2$ by $\cI_0/4$. Note that the sum $\sum_{i=a-1}^{\infty}i^{\gamma-\alpha-1}$ corresponds to the Riemann zeta function $\zeta(s)$ for $s=\alpha-\gamma+1>1$, which 
converges to a real value. Thus, by Markov's Inequality we get $Pr(E_3)\ge 1/2$.
\end{proof}

\begin{claim}
 $Pr(E_4) \ge 1/2$
\end{claim}
\begin{proof}
We bounded the interference from stations in $B(x, 1)$ in the previous claim. Now we show that the expected outer interference
is as well not too big.
Let $D^c$ denote the set of all stations outside of $D$, $D^c = V\setminus D$.
Since in every unit ball the mass of probability is at most $C_1$ we can bound expected value of $\cI(D^c)$ in the similar fashion as
in the Claim~\ref{c:interf}. By splitting the space into a layers $B(x,i+1)\setminus B(x,i)$ and covering each layer with $O(i^{\gamma-1})$
unit balls, where the probability is at most $C_1 c_\eps$, we get
$$E[\cI(D^c)] \le \sum_{i\ge 1} i^{\gamma-1} C_1 c_\eps P /(i^\alpha) \le C_1 c_\eps P\sum_{i\ge1} i^{\gamma-\alpha-1} \le \cI_0 / 4$$
where the last inequality follows from the fact that $r\le (\frac{2q}{c_\eps C_1})^{1/\gamma}$ which implies $\cI_0 = \cN/(2(2r)^\alpha) \ge \cN c_\eps C_1/(2^{\alpha+4}q)$, and the choice of
$q \le 1/(2^{\alpha+4}\beta \sum_{i\ge 1} i^{\gamma-\alpha-1})$.
Again, we use the Markov's Inequality to conclude the proof of the Claim.
\end{proof}

Finally we set $q=1/(z^\gamma 2^{\alpha+4}\beta \sum_{i\ge 1} i^{\gamma-\alpha-1})$ with respect to the bounds from previous two claims.
By combining all the claims we get $p(v) \ge Pr($every $w\in D_0$ hears a message$) \ge q/8\cdot (1/4)^{a^\gamma z^\gamma q} = 2c_3/c_2$.
\end{proof}

\comment{
\begin{fact}
  For every set $D_0$, defined and considered under conditions from Lemma~\ref{l:playoff2}, satisfying \ref{l:playoff2}.1, \ref{l:playoff2}.2, \ref{l:playoff2}.3
  for every $v\in D_0$ DensityTest($v$) and PlayOff($v$) are satisfied, thus $v$ quits in the considered stage whp.
\end{fact}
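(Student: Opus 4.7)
The plan is to show that, in a single round of Playoff, every $v \in D_0$ simultaneously receives the same transmitted message with probability at least $q/8 \cdot (1/4)^{a^\gamma z^\gamma q}$. The argument decomposes success into four favorable events, one per region of the partition $D_0, D_1, D_2, V \setminus D$ of the network (with $D = D_0 \cup D_1 \cup D_2$). Because the per-station transmission coins are independent and the four regions are disjoint, these events are jointly independent, so it suffices to lower-bound each separately and multiply.

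By Fact~\ref{f:close_int}, any $v \in D_0$ decodes a single transmitter from $D_0$ (which lies within distance $2r$ of $v$) provided the total interference at $v$ is at most $\cI_0 := \cN/(2(2r)^\alpha)$. I therefore define the events: (E1) exactly one station in $D_0$ transmits; (E2) no station in $D_1$ transmits (these lie too close to $D_0$ to be tolerated in the SINR denominator); (E3) interference from $D_2$ is at most $\cI_0/2$; (E4) interference from $V \setminus D$ is at most $\cI_0/2$. If all four hold, then every $v \in D_0$ (including the unique transmitter, which hears itself by assumption) receives the message.

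I bound each event in turn. For (E1), the Playoff-mass in $D_0$ lies in $[q, 1/2]$ by Lemma~\ref{l:playoff2}(1), so Fact~\ref{f:sum} yields $Pr(E_1) \ge q/2$. For (E2), cover $D_1 \subseteq B(x,ar)$ by $\chi(r,ar) \le a^\gamma$ balls of radius $r$; by Lemma~\ref{l:playoff2}(2) each carries mass at most $z^\gamma q$, so the total mass in $D_1$ is at most $a^\gamma z^\gamma q$, and Fact~\ref{f:notransmit} gives $Pr(E_2) \ge (1/4)^{a^\gamma z^\gamma q}$. For (E3), partition $D_2$ into annular shells at radii $(a+i)r$ from $x$; bounded growth covers each shell by $O(i^{\gamma-1})$ balls of radius $r$, each of mass at most $z^\gamma q$. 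The expected contribution from the shell is $O(i^{\gamma-1} z^\gamma q P/(ir)^\alpha)$; summing and using $\alpha > \gamma$ to converge $\sum i^{\gamma-\alpha-1}$ gives $E[\cI(D_2)] = O(z^\gamma q P / r^\alpha)$, which becomes at most $\cI_0/4$ once $q$ is chosen small enough (as a function of $\alpha, \gamma, \beta$); Markov then gives $Pr(E_3) \ge 1/2$. For (E4), apply the same annular bookkeeping at unit scale: every unit ball contributes Playoff-mass at most $C_1 c_\eps$, so $E[\cI(V \setminus D)]$ is bounded by $C_1 c_\eps P \sum_{i \ge 1} i^{\gamma-\alpha-1}$, and the constraint $r \le (2q/(c_\eps C_1))^{1/\gamma}$ from Lemma~\ref{l:playoff2} makes $\cI_0$ proportional to $c_\eps C_1 / q$, so the same small-$q$ regime drives the expectation below $\cI_0/4$ and Markov gives $Pr(E_4) \ge 1/2$.

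Multiplying by independence yields $p(v) \ge (q/2) \cdot (1/4)^{a^\gamma z^\gamma q} \cdot (1/2) \cdot (1/2) = q/8 \cdot (1/4)^{a^\gamma z^\gamma q}$, as required. The main technical obstacle is to pick a single constant $q$ that simultaneously forces the expected interferences in (E3) and (E4) below $\cI_0/4$; the parameters $a, z$ are already pinned down by the geometric requirements of Lemma~\ref{l:playoff2}, and both interference constraints reduce via $\alpha > \gamma$ to $q$ being bounded by an explicit constant depending only on $\alpha, \gamma, \beta$. This is precisely why the choice of $q$ (and implicitly the ratio $c_3/c_2$, to be calibrated so that $2c_3/c_2$ matches the final probability bound) must be postponed to the very end of the proof.
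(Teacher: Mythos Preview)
Your four-event decomposition for the Playoff side is exactly the paper's route (it is the content of Lemma~\ref{l:receiving}), and your bounds on $Pr(E_1),\dots,Pr(E_4)$ match the paper's argument. So the approach is the same rather than a new one.

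There is, however, a genuine gap: the statement asserts that \emph{both} DensityTest$(v)$ and Playoff$(v)$ return True whp, and you address only Playoff. Nothing in the $E_1$--$E_4$ analysis touches DensityTest, which is run with the unscaled probabilities $p_w$ (not $c_\eps p_w$) and for which the relevant hypothesis is condition~(3) of Lemma~\ref{l:playoff2}: $\sum_{u\in B(w,2/3)} p_u \ge C_1/(2\chi(1/6,1))$ for every $w\in D_0$. You need to invoke Proposition~\ref{prop:density}(1) on that lower bound (together with the standing $C_1$ upper bound on unit-ball sums to get $\sum_{u\in B(w,2/3)} p_u \le 1/2$) to conclude DensityTest$(w)$ returns True whp. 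Without this, a station in $D_0$ might fail the first test in line~5 of StabilizeProbability and never reach Playoff at all, so the ``$v$ quits'' conclusion would not follow.

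A second, smaller omission: you establish the per-round reception probability $p(v)\ge q/8\cdot(1/4)^{a^\gamma z^\gamma q}$ and note that $c_3/c_2$ is calibrated to it, but you do not actually close the argument. The paper sets $2c_3/c_2$ equal to this bound and then applies a Chernoff bound over the $c_2\log n$ rounds of Playoff to conclude that at least $c_3\log n$ messages are received whp; you should state that step explicitly.
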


\begin{proof}
  Since $D_0$ satisfy condition \ref{l:playoff2}.3, by Proposition~\ref{prop:density} we have that DensityTest($v$) returns True.
  From Lemma~\ref{l:receiving} we get that the probability of receiving a message $p(v)\ge2c_3/c_2$. Using the Chernoff Bound we get that the probability
  of receiving less than $c_3\log n$ out of $c_2\log n$ messages is negligible, thus PlayOff returns True whp.
\end{proof}

This concludes the proof of Lemma~\ref{l:playoff}.
}

Given the result of Lemma~\ref{l:playoff2}, we are ready to finish the proof
of Lemma~\ref{l:playoff}. Choose $c_2$ and $c_3$ such that $2c_3/c_2=q/8\cdot (1/4)^{a^\gamma z^\gamma q}$.
Lemma~\ref{l:playoff2} guarantees that there exists a ball $D_0$ satisfying conditions
(1)--(3) from this lemma. Then, Lemma~\ref{l:receiving} guarantees that 
Playoff returns true whp, and (3) that DensityTest returns true whp for
each active element of $D_0$. This concludes the proof of Lemma~\ref{l:playoff}.
}

\def\lemmatwo{
\subsection{Proof of Lemma~\ref{lemma2}}
\labell{sec:lemma2}

Let $\eps'=\eps/2$.
Here we prove 
Lemma~\ref{lemma2}
stating that for every station $v$ 
the probability distributed among the stations from $B(v,\eps')$
after execution of StabilizeProbability procedure is bounded from
below by a constant.
We show this for given $C_1, c_2, c_3$, we also fix the values of $c_\eps$ and $C_2$ to be respectively
$c_\eps\gets 1/(\eps^\alpha C_1 c_d)\cdot 8\ln(4c_2/c_3)$ and
$C_2\gets \min(c_3/(8c_2), C_1c_d/(2c_d))/c_\eps$,
where $c_d$ is the constant from Prop.~\ref{prop:density}.

Intuitively, $c_\eps$ is chosen large enough to make successful transmissions on distance
larger than $\eps'$ very unlikely, provided the probability mass in a considered unit ball
is large enough (close to $C_1$). Increased probabilities generate large noise preventing
communication on distance larger than $\eps'$. 
Since the Playoff uses a constant probability scaled up by the factor of $c_\eps$ in the 
following proof we use the constant $C_2'=C_2 c_\eps$ for the sake of clarity.

The main step in the proof is to show that, by our choice of $C_2$ and $c_\eps$,
a station whp does not quit if the mass of probability in its close proximity is small.

\begin{lemma}
\labell{lemma:lowpbb}
 For every node $v$ if $$\sum_{\substack{w\text{ is active}\\ w \in B(v,\eps')}} p_w c_\eps< 2C_2' $$
 then the probability that $v$ turns off with color $p_v$ is negligible.
\end{lemma}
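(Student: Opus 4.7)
My plan is to use that $v$ turns off with color $p_v$ only if both DensityTest$(v)$ and Playoff$(v)$ return True in some iteration of the inner ``for $c'$ times'' loop, and to show that the joint probability per iteration is negligible; a union bound over the at most $c'\cdot O(\log n)$ iterations that $v$ performs then finishes. I will distinguish two cases depending on the density in $B(v,1)$. If $\sum_{w\in B(v,1)}p_w<C_1 c_d$, Proposition~\ref{prop:density}(2) directly makes DensityTest$(v)$ return False whp. Otherwise I will argue that Playoff$(v)$ returns False whp, by bounding the per-round receiving probability $p_r$ by $c_3/(2c_2)$ and applying a Chernoff bound over the $c_2\log n$ rounds.

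For the per-round bound I split the receive event by the distance of the heard transmitter. In the \emph{close} case (transmitter in $B(v,\eps')$), the event is contained in ``some station of $B(v,\eps')$ transmits'', which has probability at most $\sum_{w\in B(v,\eps')}p_w c_\eps<2C_2'\le c_3/(4c_2)$ by the lemma's hypothesis and the choice $C_2 c_\eps\le c_3/(8c_2)$.

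In the \emph{distant} case the key observation is that, since $P=\cN\beta$ with $\beta\ge 1$, a single transmitter at distance $d<\eps'$ alone produces interference $P/d^\alpha>\cN/(\eps')^\alpha$, which already exceeds the maximal interference $\cN(1/(d')^\alpha-1)$ compatible with SINR-reception from any distance $d'\ge\eps'$. Hence the distant case requires both that no station of $B(v,\eps')$ transmits and that the total interference $I$ at $v$ from $B(v,1)\setminus B(v,\eps')$ stays below $\cN/(\eps')^\alpha$. The dense-case hypothesis together with the lemma's hypothesis yields $\sum_{w\in B(v,1)\setminus B(v,\eps')}p_w\ge C_1 c_d/2$, and since each such transmitter contributes at least $P=\cN\beta$ to $I$ when it transmits, the expected interference during Playoff satisfies $E[I]\ge\tfrac12 C_1 c_d c_\eps\cN\beta$. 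The choice $c_\eps=8\ln(4c_2/c_3)/(\eps^\alpha C_1 c_d)$ makes $E[I]$ a multiple of order $\ln(c_2/c_3)$ of the threshold $\cN/(\eps')^\alpha$, after which a lower-tail concentration bound for $I$ should give $\Pr(I<\cN/(\eps')^\alpha)\le c_3/(4c_2)$, yielding $p_r\le c_3/(2c_2)$.

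I expect the hardest step to be this last lower-tail concentration. A naive union bound over transmitters in the annulus is far too loose, since the expected number of transmitters there is $\Omega(C_1 c_d c_\eps)\gg 1$; moreover, individual summands of $I$ can be as large as $P/(\eps')^\alpha$ when the transmitter sits near the inner boundary, so standard Chernoff for Bernoulli sums does not directly apply. I plan to use a Chernoff-style estimate based on the moment generating function $E[e^{-sI}]$ together with a shell decomposition of the annulus via the bounded growth property, tuning the parameter $s$ so as to exploit the fact that the logarithmic factor built into $c_\eps$ comfortably beats the target $c_3/(4c_2)$.
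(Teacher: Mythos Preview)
Your plan is essentially the paper's proof: the same case split on $\sum_{w\in B(v,1)}p_w$ against $C_1c_d$ via Proposition~\ref{prop:density}(2), the same decomposition of the per-round Playoff reception probability into a close contribution bounded by $\sum_{w\in B(v,\eps')}p_wc_\eps<2C_2'\le c_3/(4c_2)$ and a distant contribution controlled by the total signal from the annulus $B(v,1)\setminus B(v,\eps')$, followed by a Chernoff bound over the $c_2\log n$ rounds.

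The only substantive difference is in the lower-tail step, where you are making your life harder than necessary. You worry that individual contributions can be as large as $P/(\eps')^\alpha$ and therefore plan an MGF argument with a shell decomposition. But every contributing station lies at distance at least $\eps'$ from $v$, so each summand $X_w$ is \emph{uniformly} bounded by $M:=\beta\cN/(\eps')^\alpha$. The paper simply rescales $Y_w:=X_w/M\in[0,1]$ and applies the ordinary lower-tail Chernoff bound for independent $[0,1]$-valued random variables to $Y=\sum_w Y_w$. Your own computation already gives $E[Y]=E[I]/M=\Theta(\ln(c_2/c_3))$, while the bad event corresponds to $Y$ below a fixed constant and hence below $E[Y]/2$; this immediately yields $\Pr(I\text{ small})\le \exp(-E[Y]/8)\le c_3/(4c_2)$. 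No shell decomposition or bespoke exponential-moment estimate is needed.
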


First, we show that Lemma~\ref{lemma2} follows from Lemma~\ref{lemma:lowpbb}.
Then, the proof of Lemma~\ref{lemma:lowpbb} will be provided.
Let one execution of lines 
4-7 in StabilizeProbability be a {\em phase} of the algorithm.
Note that all stations switched off (by quitting in line 7) during a phase have the same color (probability).
Consider any $v$ participating in an execution of the protocol. If any $w\in B(v,\eps')$
does not switch
off until the end of the last phase then the final value of $p_w$ is equal to $2p_{\max}=2C_2'/c_\eps$ and
therefore $\sum_{\substack{w: p_w=2p_{\max} \\ w \in B(v,\eps')}}p_w c_\eps \geq C_2'/c_\eps$
which in turn means that
the statement of Lemma~\ref{lemma2} is satisfied.
Thus, consider the case that all elements of $B(v,\eps')$ switch off before the last phase.
Let $j$ be the phase in which $v$ quits and let $B_i$ denote the set of stations from
$B(v,\eps')$ active after the $i$th phase. Then, by Lemma~\ref{lemma:lowpbb},
$\sum_{w\in B_{j-1}}p_w c_\eps\geq 2C_2'$ whp. Let $k$ be the last phase such that
$\sum_{w\in B_{k}}p_w c_\eps\geq 2C_2'$. Such phase $k$ exists because of our assumption that all
stations eventually switch off. As the probabilities of active stations are multiplied
by $2$ after each phase, the condition $2\sum_{w\in B_{k+1}}p_w c_\eps< 2C_2'$ due to the choice
of $k$. Thus, the set $B_{k}\setminus B_{k+1}$ of stations quitting in the $(k+1)$st phase
satisfies $\sum_{w\in B_k\setminus B_{k+1}}p'_w c_\eps\geq C_2'$, where $p'_w=2p_w$ is the probability
assigned to $w$ in phase $k+1$. This shows that the color assigned to stations in
phase $k+1$ satisfies the inequality from Lemma~\ref{lemma2}.

\comment{ 
The correctness of Lemma~\ref{lemma1} follows from Lemma~\ref{lemma:lowpbb}.
The station $v$ (whp) turns off only when the probability mass in $B(v,\eps')$ is at least $2C_2'$.
If there is some station $w\in B(v,\eps')$ alive at the end of the phase, then $w$ quits with probability $p_{max}=C_2'/c_\eps$.
Otherwise the sum of probabilities of active stations from $B(v,\eps')$ at the end is $0$.
By Lemma~\ref{lemma:lowpbb} there was a round when the sum exceeded $2C_2'$. If in every stage stations with cumulative probability at most $C_2'$ would have been turned off then, by the fact that every station
that survives PlayOff doubles its probability, the sum of probability over active nodes inside $B(v,\eps')$ would never reach $0$. So there have been a stage that the set of stations $T$ have quit with the same color, and
$\sum_{w\in T}p_w c_\eps > C_2'$. This concludes our observation.
}

Now, it remains to prove Lemma~\ref{lemma:lowpbb}. Observe that $v$ turns off only when it receives at least $c_3\log n$ messages during the PlayOff and
DensityTest($v$) returns True. By Proposition~\ref{prop:density} we only need to consider the case when

\begin{enumerate}
\item[(a)]
$\sum_{w \in B(v,\eps')} p_w c_\eps< 2C_2'$ and
\item[(b)]
$\sum_{w \in B(v,1)} p_w \ge C_1 c_d$.
\end{enumerate}
In the remaining part of the proof we show that the probability of receiving $c_3\log n$ messages during PlayOff is negligible for our choice of $C_2'$ and $c_\eps$, provided (a) and (b) hold.

Let $p(v)$ denote the probability that $v$ receives a message if all active stations transmit
with currently assigned probabilities. Below, we express a condition regarding $p(v)$ which is sufficient for correctness of Lemma~\ref{lemma2}.

\begin{fact}\labell{f:cher:lem2}
 If $p(v) < c_3/(2c_2)$ then PlayOff($v$) returns False with high probability.
\end{fact}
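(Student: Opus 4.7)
\medskip

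The plan is a direct application of a multiplicative Chernoff bound. First I observe that within PlayOff each active station $w$ transmits in round $t$ independently with probability $p_w c_\varepsilon$, and these coin flips are independent across rounds. Consequently the indicator random variables $Y_t = [v \text{ receives a message in round } t]$ are mutually independent Bernoulli variables, each with success probability at most $p(v)$; here I use the hypothesis that $p(v)$ is the per-round reception probability under the fixed probability assignment in effect throughout PlayOff. Let $X = \sum_{t=1}^{c_2 \log n} Y_t$ be the total number of messages received; PlayOff returns True precisely when $X \ge c_3 \log n$.

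Since $p(v) < c_3/(2c_2)$, the variable $X$ is stochastically dominated by a Binomial$(c_2 \log n,\, c_3/(2c_2))$ variable whose mean is exactly $\mu = (c_3/2)\log n$. The threshold $c_3\log n$ required for PlayOff to return True is therefore at least $2\mu$. By the standard multiplicative Chernoff upper tail,
\[
\Pr(X \ge c_3 \log n) \le \Pr\!\left(X \ge 2\mu\right) \le \exp(-\mu/3) = n^{-c_3/6}.
\]
For the constants $c_2, c_3$ fixed at the end of Section~\ref{sec:lemma1} (where their ratio was determined, leaving each free to be scaled up proportionally), we pick them large enough that $c_3/6 \ge 4$, giving $\Pr(X \ge c_3 \log n) \le 1/n^4$, which is negligible in the sense used throughout the paper.

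The main (and only) subtlety is ensuring that the scaling freedom in the earlier sections leaves us room to enlarge $c_3$ here; this is fine since Lemma~\ref{l:playoff}'s proof fixes only the ratio $c_3/c_2$, so we may take $c_2 = 2c_3 / (q/4 \cdot (1/4)^{a^\gamma z^\gamma q})$ with $c_3$ as large as needed to absorb the constants from the Chernoff bound. Everything else is immediate from the definitions of PlayOff and $p(v)$.
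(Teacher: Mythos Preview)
Your proof is correct and follows exactly the same approach as the paper: bound the expected number of received messages by $(c_3/2)\log n$ and apply a Chernoff upper-tail bound, noting that since only the ratio $c_3/c_2$ is fixed in Section~\ref{sec:lemma1} one may scale both constants up to make the failure probability polynomially small. Your version is simply a more explicit rendering of the paper's one-sentence argument.
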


\begin{proof}
 The expected number of rounds in which $v$ receives a message is $c_2\log n \cdot p(v) \le \frac{c_3\log n}{2}$. Thus, by
 Chernoff Bound we can make the probability that PlayOff($v$)=true ($v$ receives at least $c_3\log n$ messages) arbitrarily small by increasing $c_2$ and $c_3$ without changing the initial
 ratio $c_3/c_2$.
\end{proof}

In the following, let $S_v$ be the sum of all signals received at node $v$, i.e.,
$$S_v=\sum_{w\text{ is transmitting}} \beta\cN/(\dist(v,w))^\alpha.$$

The next two facts show the way to bound the probability $p(v)$ of receiving a message by $v$
in terms of the sum of probabilities in close neighborhood of $v$ and of the interference from the
whole network. These facts combined with Fact~\ref{f:cher:lem2} give the property claimed
in Lemma~\ref{lemma2}.

\begin{fact}
\label{fact:maxint}
If $S_v > \frac{2\beta\cN}{\eps'^\alpha}$ then $v$ cannot receive a message from the outside of $B(v,\eps')$.
\end{fact}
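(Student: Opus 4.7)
The plan is to derive the conclusion directly from the SINR definition, by splitting the total received signal $S_v$ into the contribution from the intended transmitter and the interference from everyone else. First I would fix an arbitrary candidate transmitter $u$ with $\dist(v,u) > \eps'$ and show that no such $u$ can satisfy the SINR threshold under the hypothesis. Because $P=\beta\cN$ and $\dist(u,v)>\eps'$, the signal contribution of $u$ at $v$ is strictly less than $\beta\cN/\eps'^\alpha$.

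Next I would write $I_v = S_v - P/\dist(u,v)^\alpha$ for the interference at $v$ when $u$ is the would-be sender. Using the hypothesis $S_v > 2\beta\cN/\eps'^\alpha$ together with the upper bound on $u$'s contribution, I get
\[
I_v \;>\; \frac{2\beta\cN}{\eps'^\alpha} - \frac{\beta\cN}{\eps'^\alpha} \;=\; \frac{\beta\cN}{\eps'^\alpha}.
\]
Plugging this into the SINR ratio gives
\[
SINR(u,v,T) \;=\; \frac{P/\dist(u,v)^\alpha}{\cN + I_v} \;<\; \frac{\beta\cN/\eps'^\alpha}{\cN + \beta\cN/\eps'^\alpha} \;=\; \frac{\beta}{\eps'^\alpha + \beta} \;<\; 1 \;\le\; \beta,
\]
where the last inequality uses $\beta \ge 1$ from the model. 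Hence the SINR threshold fails and $v$ cannot successfully receive from $u$.

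Since $u$ was an arbitrary transmitter lying outside $B(v,\eps')$, the conclusion follows for every such potential sender. There is essentially no obstacle here beyond being careful to separate the signal of the prospective transmitter from the rest of $S_v$; the identity $P=\beta\cN$ fixed earlier by the normalization $r=1$ is what makes the bookkeeping line up, and the inequality $\beta \ge 1$ is needed at the very last step to conclude $SINR < \beta$ rather than merely $SINR < 1$.
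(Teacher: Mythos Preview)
Your argument is correct and follows essentially the same route as the paper: bound the signal of any transmitter at distance greater than $\eps'$ by $\beta\cN/\eps'^\alpha$, subtract this from $S_v$ to lower-bound the interference by $\beta\cN/\eps'^\alpha$, and plug into the SINR formula to get $SINR < \beta/(\eps'^\alpha+\beta) < \beta$ using $\beta\ge 1$. The only cosmetic difference is that you pass through the intermediate bound $SINR<1\le\beta$, whereas the paper concludes $SINR<\beta$ directly.
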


\begin{proof}
The strength of signal from node in a distance at least $\eps'$ from $v$ is at most $\beta\cN/\eps'^\alpha$ (since $P=\beta \cN$),
and the interference $\cI$ is at least $S_v-\frac{\beta\cN}{\eps'^\alpha}>\frac{\beta\cN}{\eps'^\alpha}$.
From these observations we have $SINR \le \beta\cN/(\eps'^\alpha(\cN + \cI)) < \beta/(\eps'^\alpha + \beta) < \beta$, since $\beta \ge 1$.
\end{proof}

\begin{fact}
\label{fact:decomp}
The probability of receiving a message at $v$ can be bounded as follows
  $$p(v) < Pr\left( S_v \le \frac{2\cN\beta}{\eps^{\alpha}}\right) + \sum_{w\in B(v,\eps')}p_w\cdot c_\eps$$
\end{fact}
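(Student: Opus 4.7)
The plan is to condition the event ``$v$ receives a message'' on whether the total received signal $S_v$ is above or below the threshold $2\cN\beta/\eps'^{\alpha}$ appearing in Fact~\ref{fact:maxint}. Writing $\tau$ for that threshold, the trivial decomposition
\[
 p(v) \;\le\; \Pr(S_v \le \tau) \;+\; \Pr\bigl(v\text{ receives a message and } S_v > \tau\bigr)
\]
already produces the first summand of the claimed bound (the statement uses $\eps$ instead of $\eps'$, but since $\eps' < \eps$ the event $\{S_v \le 2\cN\beta/\eps^{\alpha}\}$ is only larger than $\{S_v \le \tau\}$, so the inequality is preserved).

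For the second summand I would invoke Fact~\ref{fact:maxint}: whenever $S_v > \tau$, no station outside $B(v,\eps')$ can achieve SINR at least $\beta$ at $v$, so the only way $v$ can successfully receive a message is from a transmitter inside $B(v,\eps')$. Consequently the event $\{v\text{ receives}\}\cap\{S_v > \tau\}$ is contained in the event ``at least one station from $B(v,\eps')$ transmits in this round.'' During PlayOff each station $w$ transmits independently with probability $p_w \cdot c_\eps$, so a union bound gives
\[
 \Pr\bigl(\exists\, w \in B(v,\eps') \text{ transmitting}\bigr) \;\le\; \sum_{w \in B(v,\eps')} p_w \cdot c_\eps,
\]
which is exactly the second summand. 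Adding the two estimates yields the stated inequality.

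There is no real obstacle here: the whole argument is a two-case split followed by a union bound, with Fact~\ref{fact:maxint} doing all of the geometric/physical work by ruling out receptions from far-away transmitters when the ambient signal is high. The only minor point of care is recalling that a successful reception at $v$ requires some station to actually be transmitting; once Fact~\ref{fact:maxint} eliminates the contribution of stations outside $B(v,\eps')$, the probability of any transmission from inside the ball is handled by a direct union bound over the independent per-round transmission events.
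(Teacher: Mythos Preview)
Your argument is essentially the paper's own: decompose according to whether the successful transmitter lies inside or outside $B(v,\eps')$, invoke Fact~\ref{fact:maxint} for the outside case, and use a union bound for the inside case. The paper phrases the split by the location of the sender rather than by conditioning on $S_v$, but the two are equivalent.

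One small slip: your parenthetical about $\eps$ versus $\eps'$ has the inequality backwards. Since $\eps'=\eps/2<\eps$, we have $2\cN\beta/\eps^{\alpha} < 2\cN\beta/\eps'^{\alpha}=\tau$, so the event $\{S_v\le 2\cN\beta/\eps^{\alpha}\}$ is \emph{smaller} than $\{S_v\le\tau\}$, not larger. Thus the bound you actually prove (with $\tau$) does not formally imply the displayed bound with $\eps$; rather, the discrepancy is a notational inconsistency in the paper between Fact~\ref{fact:maxint} (which uses $\eps'$) and the statement here (which uses $\eps$). The downstream use in Proposition~\ref{prop:ceps} works with either constant, so this does not affect the argument's substance.
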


\begin{proof}
The former summand corresponds to the event of receiving a message from a station in a distance at least $\eps$, thus by the Fact~\ref{fact:maxint},
the sum of signals at $v$ should be at most $\frac{2\cN\beta}{\eps^{\alpha}}$. The latter is trivial upper bound on the probability of receiving a message from some
station in $B(v,\eps')$.
\end{proof}

Note that in the setting considered in the proof of Lemma~\ref{lemma:lowpbb} the quantity
$\sum_{w\in B(v,\eps')}p_w\cdot c_\eps$ can be bounded by $2 C_2'$.
By our choice of $C_2'\le c_3/(8c_2)$, in order to prove Lemma~\ref{lemma:lowpbb}, it suffices to show that $Pr(S_v \le \frac{2\cN\beta}{\eps^{\alpha}}) < c_3/(4c_2)$.
Then, by Fact~\ref{fact:decomp}, if $\sum_{w\in B(v,\eps')} p_w c_\eps< 2C_2 = c_3/(4c_2)$ then $p(v) < c_3/(2c_2)$ which in turn gives the statement of Lemma~\ref{lemma:lowpbb} (by Fact~\ref{f:cher:lem2}).

From now on we focus on bounding the probability that interference at $v$ allows for successful transmission on the distance $\eps$ or greater, i.e. that $S_v \le \frac{2\cN\beta}{\eps^{\alpha}}$. As we already pointed out, it is sufficient to bound this probability from above by $c_3/(4c_2)$.

\begin{proposition}
\label{prop:ceps}
 Assuming that DensityTest($v$) is satisfied the following inequality holds with high probability:
 $Pr(S_v \le \frac{2\cN\beta}{\eps^{\alpha}}) < c_3/(4c_2)$.
\end{proposition}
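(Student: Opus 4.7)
The plan is to exploit the chosen value of $c_\eps$ to show that, conditional on DensityTest succeeding, the expected interference at $v$ during a single Playoff round generated by transmitters already inside $B(v,1)$ comfortably exceeds the threshold $2\cN\beta/\eps^\alpha$, with enough concentration to keep the lower-tail probability below $c_3/(4c_2)$.

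First, I would use the contrapositive of Proposition~\ref{prop:density}(2): because DensityTest($v$) returned True, with high probability
\[
\sum_{\substack{w\text{ active}\\ w\in B(v,1)}}\!\! p_w \;\ge\; C_1 c_d,
\]
and I condition on this event throughout. It is precisely this conditioning that supplies the ``with high probability'' qualifier in the statement.

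Second, for the Playoff round in question stations transmit independently, station $w$ with probability $p_w c_\eps$. Let $X$ count the transmitters in $B(v,1)$; it is a sum of independent Bernoulli variables with
\[
\mu \;:=\; E[X] \;=\; c_\eps\!\!\sum_{w\in B(v,1)}\! p_w \;\ge\; C_1 c_d c_\eps \;=\; \frac{8\ln(4c_2/c_3)}{\eps^\alpha},
\]
where the last equality is the definition of $c_\eps$. Each transmitting $w\in B(v,1)$ contributes at least $P/1^\alpha=\beta\cN$ to $S_v$, so deterministically $S_v\ge X\beta\cN$ and hence
\[
\{S_v\le 2\cN\beta/\eps^\alpha\}\;\subseteq\;\{X\le 2/\eps^\alpha\}.
\]

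Third, I would close with a multiplicative Chernoff lower-tail bound. Since $\eps\le 1$ we get $\mu\ge 8\ln(4c_2/c_3)$, and assuming $\ln(4c_2/c_3)\ge 1/2$ (a condition one secures by picking the still-free constants $c_2,c_3$ suitably at the final calibration step), we have $2/\eps^\alpha\le \mu/2$. Then $\Pr(X\le \mu/2)\le e^{-\mu/8}$ yields
\[
Pr(S_v\le 2\cN\beta/\eps^\alpha)\;\le\;\Pr(X\le \mu/2)\;\le\; e^{-\mu/8}\;\le\; e^{-\ln(4c_2/c_3)}\;=\;\frac{c_3}{4c_2},
\]
as required. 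The one delicate point is constant-matching: the factor $8\ln(4c_2/c_3)$ baked into the definition of $c_\eps$ is forced by exactly this calculation, so its appearance earlier in Section~\ref{sec:lemma2} is not arbitrary. Beyond that, the argument uses only DensityTest's guarantee and a standard Chernoff tail, with no further technical difficulty.
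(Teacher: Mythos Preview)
Your argument is correct and follows the paper's strategy: use the contrapositive of Proposition~\ref{prop:density}(2) to get $\sum_{w\in B(v,1)}p_w\ge C_1c_d$, translate via the definition of $c_\eps$ into a large expectation, and finish with a Chernoff lower-tail bound. The one tactical difference is worth noting. The paper applies Chernoff directly to (a lower bound on) $S_v$, which forces it to restrict to transmitters in $B(v,1)\setminus B(v,\eps')$ so that each individual signal contribution is bounded above by $\beta\cN/\eps^\alpha$ before normalising; this in turn costs a factor $2$ in the sum of probabilities and requires invoking the standing assumption $\sum_{w\in B(v,\eps')}p_w c_\eps<2C_2'\le C_1c_d$. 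You instead apply Chernoff to the \emph{count} $X$ of transmitters in $B(v,1)$, which is automatically a sum of $\{0,1\}$-valued variables, and then pass to $S_v$ via the deterministic inequality $S_v\ge X\beta\cN$. This is cleaner: it uses only the DensityTest hypothesis, avoids the annulus restriction, and sidesteps the normalisation step entirely, at the price of the mild extra calibration $\ln(4c_2/c_3)\ge 1/2$, which is amply satisfied by the choice of $c_2,c_3$ made in Section~\ref{sec:lemma1}.
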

\begin{proof}
In the proof, we take advantage of the fact that $S_v$ is at least the sum
of signals arriving from $B(v,1)\setminus B(v,\eps')$ and, according to (a), (b) and the fact
that $C_2'\leq C_1 c_d/2$, the sum of probabilities in this area is at
least $C_1c_d/2$.
As each transmitter in Playoff uses its probability scaled up by the factor $c_\eps$,
$$\begin{array}{rcl}
E(S_v)&\geq& c_\eps\sum_{w\in B(v,1)\setminus B(v,\eps')}p_w\cdot \beta\cN/(\dist(w,v)^\alpha)\\
&\geq&
c_\eps\sum_{w\in B(v,1)\setminus B(v,\eps')}p_w\cdot \beta\cN\\
&\geq& 
\beta\cN c_\eps C_1 c_d/2\\
&=&\frac{2\beta\cN}{\eps^\alpha}\cdot 4\ln(4c_2/c_3)
\end{array}$$
where the second inequality follows from the fact that $\dist(v,w)\leq 1$ for
each $w\in B(v,1)$, the third inequality from the fact that $\sum_{w\in B(v,1)\setminus B(v,\eps')}p_w\geq C_1 c_d/2$.

On the other hand, the above estimation of $S_v$ can be seen as the sum
of independent random variables $X_w$ equal either $0$ or $\beta\cN /\dist(v,w)^\alpha$
over all $w\in B(v,1)\setminus B(v,\eps')=B$.
Thus, each of these variables satisfies $X_w\leq \beta\cN/\eps^\alpha$
and $E(S_v)\geq E(\sum_{w\in B}X_w)\geq \frac{2\beta\cN}{\eps^\alpha}\cdot 4\ln(4c_2/c_3)$.

Now,we scale the variables $X_w$ in order to apply the Chernoff bound.
Let $Y_w=X_w/(\beta\cN/\eps^\alpha)\leq 1$, let $Y=\sum_{w\in B}Y_w$.
Then,
$$E(Y)=E(\sum_{w\in B} X_w)/(\beta\cN/\eps^\alpha)\geq 2\cdot 4\ln(4c_2/c_3).$$
Moreover,
$$\begin{array}{lclclcl}
Pr(S_v\leq \frac{2\cN\beta}{\eps^\alpha})&<&Pr(\sum_{w\in B} X_w\leq \frac{2\cN\beta}{\eps^\alpha})\\
&=&Pr(Y\leq 2) \\
&\leq& Pr(Y\leq \frac12 E(Y))\\
&\leq& \exp(-E(Y)/8)\leq c_3/(4c_2)
\end{array}$$
where the third last inequality follows from $4\ln(\frac{4c_2}{c_3})\geq 2$,
since $c_2\geq c_3$.
This finishes the proof of Proposition~\ref{prop:ceps}.
\end{proof}

\comment{ 

\begin{proposition}
\label{prop:ceps}
 Assuming that DensityTest($v$) is satisfied the following inequality holds with high probability:
 $Pr(\cI_v \le \frac{2\cN\beta}{\eps^{\alpha}}) < c_3/(4c_2)$.
\end{proposition}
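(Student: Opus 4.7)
The plan is to lower-bound the expected total signal $E(S_v)$ using the density information guaranteed by DensityTest, and then apply a Chernoff lower-tail bound. First, since DensityTest($v$) returns True whp, the contrapositive of Proposition~\ref{prop:density}(2) lets me assume $\sum_{w\in B(v,1)} p_w \ge C_1 c_d$. Combined with hypothesis (a), namely $\sum_{w\in B(v,\eps')} p_w c_\eps < 2C_2'$, and the choice of $C_2$ (which makes $2C_2'/c_\eps$ at most $C_1 c_d/2$), I deduce
$$\sum_{w\in B(v,1)\setminus B(v,\eps')} p_w \;\ge\; C_1 c_d/2.$$

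During Playoff each station $w$ transmits independently with probability $c_\eps p_w$, and any such $w$ in the annulus contributes a signal $\beta\cN/\dist(v,w)^\alpha \ge \beta\cN$ (using $\dist(v,w)\le 1$). Hence $E(S_v) \ge c_\eps\beta\cN \cdot C_1 c_d/2$, and the definition $c_\eps = 8\ln(4c_2/c_3)/(\eps^\alpha C_1 c_d)$ is calibrated precisely so that the right-hand side is at least $(2\beta\cN/\eps^\alpha)\cdot 4\ln(4c_2/c_3)$.

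To turn this mean bound into a tail bound on $S_v$, I view the relevant sum as over independent summands $X_w$, one per station in the annulus, each of which is either $0$ or $\beta\cN/\dist(v,w)^\alpha \in [\beta\cN,\,\beta\cN/\eps^\alpha]$. Rescale to $Y_w = X_w/(\beta\cN/\eps^\alpha) \in [0,1]$ and set $Y = \sum_w Y_w$, so that $E(Y)\ge 8\ln(4c_2/c_3)$. The event $S_v \le 2\beta\cN/\eps^\alpha$ becomes $Y\le 2$, which (assuming $E(Y)\ge 4$, a mild requirement once $c_2/c_3$ is bounded away from $1$) implies $Y \le E(Y)/2$. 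The standard multiplicative Chernoff lower tail then yields
$$\Pr\bigl(Y\le E(Y)/2\bigr) \;\le\; \exp\bigl(-E(Y)/8\bigr) \;\le\; c_3/(4c_2),$$
which is the desired inequality.

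The main obstacle is that the natural summands over all of $B(v,1)$ are unbounded: a station arbitrarily close to $v$ would contribute an arbitrarily large signal and preclude a direct Chernoff application. The key step that sidesteps this is to discard the contribution of $B(v,\eps')$ entirely, which is legitimate precisely because hypothesis (a) says the probability mass in $B(v,\eps')$ is already small, and then to work exclusively within the annulus $B(v,1)\setminus B(v,\eps')$, where distances are bounded below by $\eps'$ and signals are uniformly bounded by $\beta\cN/\eps^\alpha$. A secondary, more clerical subtlety is that the constants $C_2$, $c_\eps$, and the factor $8$ appearing in the Chernoff exponent have to be tuned simultaneously so that the final bound hits exactly $c_3/(4c_2)$.
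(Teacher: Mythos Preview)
Your proposal is correct and follows essentially the same argument as the paper: restrict to the annulus $B(v,1)\setminus B(v,\eps')$ so that the per-station signals are uniformly bounded by $\beta\cN/\eps^\alpha$, use condition~(b) together with~(a) and the choice of $C_2$ to lower-bound the probability mass there by $C_1c_d/2$, compute $E(S_v)\ge (2\beta\cN/\eps^\alpha)\cdot 4\ln(4c_2/c_3)$ via the calibrated $c_\eps$, rescale, and apply the multiplicative Chernoff lower tail $\Pr(Y\le E(Y)/2)\le e^{-E(Y)/8}$. Your explicit remark that discarding $B(v,\eps')$ is what makes the summands bounded is exactly the point the paper is exploiting.
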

\begin{proof}

During the Playoff procedure, we reduce the number of successful transmissions on a distance $\geq \eps$ by increasing the expected interference.

\begin{claim}
 If $E[\cI_v] \ge x \frac{\beta\cN}{\eps^\alpha}$ then $Pr(\cI_v < \frac{2\beta\cN}{\eps^\alpha}) < e^{-x/18}$.
\end{claim}

\begin{proof}
 First let us state simple observation bounding the mass of probability around $v$ in concise form.
\begin{claim}
 There exists a constant $\theta \in (0,1)$ such that when $\sum_{\substack{w\text{ is active}\\ w \in B(v,\eps')}} p_w \cdot c_\eps< C_2' $ then

 $$\sum_{w\in B(v,1) \setminus B(v,\eps')} p_w > \theta\cdot C_1$$.
\end{claim}
\begin{proof}
  Proposition~\ref{prop:density} bounds the sum of probability in unit ball centered at $v$.
  Since the only bound on $C_2'$ is $C_2' \le c_3/(8c_2)$ it can be made arbitrarily small, in particular $O(C_1)$.
\end{proof}

 Observe that $\cI_v \ge \sum_{w\in B(v,1) \setminus B(v,\eps')} X_w$, where $X_w$ denote the signal strength from station $w$ sensed at $v$ and
 $X_w \le \cN\beta/\eps^\alpha = M$. Now we can bound the probability of $\cI_v < 2\beta\cN/\eps^\alpha$ with Chernoff Bound by setting
 $Y_w = X_w/M$ and considering the event $Y=\sum_{w} Y_w = \cI_v/M < 2\beta\cN/(\eps^\alpha M)=2$, where $E[Y] \ge c_\eps \theta C_1/M = x$
\end{proof}

Since we can boost the transmission probability in the PlayOff by the factor $c_\eps$ and by proposition~\ref{prop:density}
the sum of probabilities in $B(v,1)$ is lower bounded by $\Omega(C_1)$ we can increase expected interference.

\begin{claim}
 If $c_\eps \ge x/(\eps^\alpha C_1 \theta)$ then $E[\cI_v] \ge x \frac{\cN\beta}{\eps^\alpha}$.
\end{claim}

\begin{proof}
$$E[\cI_v] \ge \sum_{w\in B(v,1) \setminus B(v,\eps')} p_w c_\eps \cdot \frac{\beta\cN}{d(w,v)} \ge c_\eps \beta\cN \sum_{w\in B(v,1) \setminus B(v,\eps')} p_w \ge c_\eps \beta\cN\theta C_1$$
\end{proof}

Observe that our choice of $C_2$ and $c_\eps$ satisfy all the claims. This concludes the proof of proposition~\ref{prop:ceps}.\end{proof}
}
}

\lemmaone

\lemmatwo



\section{Broadcast}
\label{s:broadcast}

\subsection{Broadcast with non-spontaneous wakeup}

\paragraph{Algorithm NoSBroadcast}
For the model with non-spontaneous wake-up,
we present the algorithm NoSBroadcast in which a message is disseminated over the network
in time $O(D\log^2 n)$. 
The algorithm works in $D$ phases.
Each phase has $O(\log^2 n)$ rounds and consists of two parts.
A node participates in the phase (is {\em active}) if it knows the source message at the beginning of the phase.
The first part of a phase executes StabilizeProbability on the set of active stations.
This execution takes $O(\log^2 n)$ rounds.
As a result, it assigns a color $p_v$ to each active node $v$.
This coloring satisfies conditions from Lemma~\ref{lemma1} and Lemma~\ref{lemma2}.
In the second part,
each active node transmits the message with probability
$\frac{p_v}{c\eps\log n}$, for some constant $c$, for $O(\log^2 n)$ rounds.
Consider any shortest path $s=v_0,v_1,\ldots,v_k$ in the communication graph from the source $s$ to a node $v_k$.
Our construction guarantees that the $i$-th vertex $v_i$ of the path
knows the source message after the $i$-th phase of the algorithm whp.

\begin{theorem}\labell{th:broad:no}
The NoSBroadcast algorithm solves the broadcast problem in the
non-spontaneous
wakeup model
in $O(D\log^2n)$ rounds whp.
\end{theorem}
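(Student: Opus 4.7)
The plan is to bound the running time directly from the structure of the algorithm and then to establish correctness by induction on the distance from the source along an arbitrary shortest path. The running time is immediate: there are $D$ phases, each consisting of one invocation of StabilizeProbability (which takes $O(\log^2 n)$ rounds by Fact~\ref{f:SP-time}) followed by $O(\log^2 n)$ transmission rounds, for a total of $O(D\log^2 n)$ rounds.

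For correctness, fix any shortest path $s=v_0,v_1,\ldots,v_k$ in the communication graph and argue by induction on $i$ that $v_i$ knows the source message at the end of phase $i$ whp. The base case $i=0$ is trivial. For the inductive step, assume $v_{i-1}$ is active at the start of phase $i$. Applying Lemma~\ref{lemma1} and Lemma~\ref{lemma2} to the active set at phase $i$, the coloring satisfies: (i) for every color $p$ and every unit ball $B$, $\sum_{w:\,p_w=p,\, w\in B}p_w<C_1$; and (ii) there is a color $p^\star$ with $\sum_{w:\,p_w=p^\star,\, w\in B(v_{i-1},\eps/2)}p_w\ge C_2$. In part~2 each active node transmits with probability $p_w/(c\eps\log n)$, so in each round the total expected transmission mass inside any unit ball, summed over all $O(\log n)$ colors, is $O(C_1/(c\eps))$, which by choosing $c$ large (as a function of $\eps$) yields the bounded density property with an arbitrarily small constant.

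In a single round of part~2, by Fact~\ref{f:sum} the probability that exactly one node of color $p^\star$ in $B(v_{i-1},\eps/2)$ transmits is $\Omega(C_2/(c\eps\log n))=\Omega(1/\log n)$. Conditioned on such a unique transmitter $w$, I want $w$'s signal to reach every point at distance $\le 1-\eps/2$; by Fact~\ref{fact01} this requires the interference at those points to be at most $\cN\alpha\eps/2$. Choosing $c$ sufficiently large as a function of $\eps$, the calculation of Fact~\ref{fact7} (summing over concentric annuli and exploiting $\alpha>\gamma$) bounds the expected interference by, say, $\cN\alpha\eps/4$, so a Markov bound gives the desired interference condition with constant probability. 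Combining, in each round $w$'s transmission succeeds to every point at distance $\le 1-\eps/2$ with probability $\Omega(1/\log n)$, whence by Fact~\ref{f:onbehalf}, applied to $w\in B(v_{i-1},\eps/2)$, the message reaches every neighbor of $v_{i-1}$ and in particular $v_i$. Over the $O(\log^2 n)$ rounds of part~2 this gives at least one success with probability $1-(1-\Omega(1/\log n))^{\Omega(\log^2 n)}=1-n^{-\Omega(1)}$, and a union bound over the $D$ phases finishes the argument.

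The main obstacle will be the interference analysis in the third step: the required reception range $1-\eps/2$ is close to the nominal range $1$, so the permissible total interference is tight (scaling with $\eps$), while $O(\log n)$ colors transmit concurrently. Making the bounded density property yield an effective communication property at this tighter range forces the constant $c$ to depend carefully on $\eps$, $\alpha$, $\beta$, $\gamma$ and $\cN$, and this is where the power of Lemma~\ref{lemma1} (per-color density bound rather than a single global bound) is essential.
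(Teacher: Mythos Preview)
Your outline mirrors the paper's: phase count gives the time bound, correctness proceeds by induction along a shortest path, and the crux is that part~2 of a phase delivers the message to every neighbor of every active node, using Lemmas~\ref{lemma1} and~\ref{lemma2} on the coloring and Fact~\ref{f:onbehalf}.

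There is, however, a real gap in your interference step. You write that ``the calculation of Fact~\ref{fact7} \dots\ bounds the expected interference by, say, $\cN\alpha\eps/4$'' and then apply Markov. But the annulus calculation in Fact~\ref{fact7} only controls interference from transmitters at distance at least a fixed constant from the receiver; it explicitly assumes $w$ is the \emph{only} transmitter in $B(v,2/3)$. Without such an assumption the expected interference at a receiver $u$ need not be bounded at all: a single active node $z$ with color $p_z=\Theta(1)$ at distance $\delta$ from $u$ contributes expected interference $\Theta\!\big(\tfrac{1}{\log n}\cdot P/\delta^{\alpha}\big)$, which blows up as $\delta\to 0$. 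Lemma~\ref{lemma1} bounds $\sum_z p_z$ in a unit ball, not $\sum_z p_z/\dist(z,u)^{\alpha}$, so the Markov argument cannot be rescued by making $c$ large.

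The paper handles exactly this by splitting the ``good round'' event into three \emph{independent} pieces over disjoint sets of stations: (1) exactly one transmitter in $B(v,\eps')$; (2) \emph{no} transmitter in $B(v,2)\setminus B(v,\eps')$; (3) interference from outside $B(v,2)$ is below $\cN\alpha\eps'$. Event~(2) has constant probability because the total transmission mass in $B(v,2)$ is $O(C_1/(c\eps))$ after scaling; it is what kills the near-receiver contributions. Event~(3) is where the annulus sum legitimately applies, since every contributing station is now at distance at least~$1$ from every point of $B(v,1)$ (this also lets you control the interference uniformly over all receivers in $B(v,1)$ rather than at a single point). Insert this silence event and your argument goes through; without it, the Markov step fails.
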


Theorem~\ref{th:broad:no} follows directly from Fact~\ref{f:SP-time} and the
lemma below.

\begin{lemma}\labell{lem:second:part}
There exists a constant $c$ such that  each neighbour in the communication
graph of each active node $v$ receives the source message whp in the second part of a phase.
%
\end{lemma}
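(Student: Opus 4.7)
The plan is to fix an active node $v$ together with an arbitrary neighbor $u$ of $v$ in the communication graph, to show that in each round of the second part $u$ receives a transmission originating in $B(v,\eps/2)$ with probability $\Omega(1/\log n)$, and then to combine a Chernoff bound over the $\Theta(\log^2 n)$ rounds with a union bound over the at most $n^2$ pairs $(v,u)$. The constant $c$ hidden in the transmission probability $p_w/(c\eps\log n)$ is chosen at the very end, large enough compared to $\eps,\alpha,\beta,\gamma,C_1$.

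By Lemma~\ref{lemma2}, there is a color $p^\star$ whose total probability mass in $B(v,\eps/2)$ is at least $C_2$. For a single round I would introduce three events that depend on disjoint sets of stations and are therefore independent: $A$ -- exactly one $p^\star$-colored node $w^\star\in B(v,\eps/2)$ transmits; $B$ -- no node of $B(v,\eps/2)$ with color different from $p^\star$ transmits; $G$ -- no node of $B(u,\eps/4)\setminus B(v,\eps/2)$ transmits. Fact~\ref{f:sum} applied to the scaled probabilities of the $p^\star$-colored stations gives $\Pr(A)=\Omega(1/\log n)$, while Lemma~\ref{lemma1} summed over the $O(\log n)$ colors and rescaled by $1/(c\eps\log n)$ bounds the sum of scaled probabilities inside any unit ball by $O(C_1/(c\eps))$; Fact~\ref{f:notransmit} then yields $\Pr(B),\Pr(G)=\Omega(1)$.

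Under $A\cap B\cap G$ the unique intended transmitter satisfies $\dist(w^\star,u)\le \dist(w^\star,v)+\dist(v,u)\le 1-\eps/2$, and every other transmitter lies outside $B(v,\eps/2)\cup B(u,\eps/4)$, hence at distance at least $\eps/4$ from $u$. By Fact~\ref{fact01} it then suffices to bound the total interference $\cI_u$ at $u$ (from transmitters other than $w^\star$) by $\cN\alpha\eps/2$ with constant conditional probability. Controlling this interference is the main technical step, because stations can cluster arbitrarily close to $u$ in the metric -- the event $G$ is tailored precisely to sidestep that issue. Splitting the space around $u$ into the inner shell $B(u,1)\setminus B(u,\eps/4)$ and the annuli $B(u,i+1)\setminus B(u,i)$ for $i\ge 1$, covering the $i$-th annulus by $O(i^{\gamma-1})$ unit balls via bounded growth, and using the unit-ball bound from the previous paragraph, the conditional expectation of $\cI_u$ is dominated by $O\bigl(C_1 P/(c\eps^{\alpha+1})\bigr)$; the tail series $\sum_{i\ge 1} i^{\gamma-1-\alpha}$ converges thanks to $\alpha>\gamma$. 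Picking $c$ large enough makes this expectation at most $\cN\alpha\eps/4$, and Markov's inequality gives conditional probability at least $1/2$ that $\cI_u$ stays below $\cN\alpha\eps/2$.

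Combining the four estimates, $u$ receives the source message in any single round with probability $\Omega(1/\log n)$. Choosing the hidden constant in $\Theta(\log^2 n)$ large enough makes the per-pair failure probability smaller than $n^{-5}$ by a standard Chernoff bound, and a union bound over all at most $n^2$ pairs $(v,u)$ completes the argument.
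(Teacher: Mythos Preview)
Your proof is correct and follows the same overall strategy as the paper: establish a per-round success probability of $\Omega(1/\log n)$ from Lemmas~\ref{lemma1} and~\ref{lemma2}, then amplify over $\Theta(\log^2 n)$ rounds. The decomposition of the good event, however, is organized differently. The paper (via Proposition~\ref{lmain}) takes a \emph{transmitter-centered} view: it silences the much larger region $B(v,2)$, bounds the interference from outside $B(v,2)$ uniformly over all points of $B(v,1)$, and then invokes Fact~\ref{f:onbehalf} once to conclude that the single transmitter in $B(v,\eps')$ reaches every neighbor of $v$ simultaneously---so the final union bound is only over the $n$ active nodes. You instead take a \emph{receiver-centered} view, fixing a specific neighbor $u$, silencing only the small patch $B(u,\eps/4)$ near that receiver, and bounding interference at the single point $u$; the price is a union bound over $n^2$ pairs. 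Your explicit choice of one color $p^\star$ and the split into events $A$ and $B$ is also a minor deviation: the paper simply applies Fact~\ref{f:sum} to all stations of $B(v,\eps')$ at once, using Lemma~\ref{lemma2} only for the lower bound on the sum. Both routes yield the same constants and the same final bound; the paper's is a touch cleaner in avoiding the per-pair argument, while yours makes the independence structure of the four events more transparent. (One cosmetic remark: the final amplification step is just $(1-p)^T\le e^{-pT}$, not really a Chernoff bound.)
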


Let us recall that $\eps'=\eps/2$.
In order to satisfy the claim of Lemma~\ref{lem:second:part}, it is sufficient that,
for each active node $v$, an active station $v'$ such that $\dist(v,v')\leq \eps'$
transmits and is heard in distance $1-\eps'$ during NoSBroadcast (see Fact~\ref{f:onbehalf}).
We show that this is actually the case in the following proposition, which concludes
the proof of Lemma~\ref{lem:second:part}, and thus also Theorem~\ref{th:broad:no}.
\begin{proposition}
\label{lmain}
There exists a constant $c$ such that for any node $v$ active in a phase there exists
a node
$v'\in B(v,\eps')$
which transmits in the second
part of the phase
and $v'$ is heard anywhere in the distance $1-\eps'$ 
whp.
\end{proposition}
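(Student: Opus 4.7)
The plan is to isolate a single round within the $\Theta(\log^2 n)$ rounds of the second part in which some station $v' \in B(v, \eps')$ is the only transmitter inside its unit ball $B(v', 1)$, and then to show that the residual interference is small enough for $v'$'s message to reach every point within distance $1 - \eps'$ of $v'$.

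\textbf{Setup and per-round event.} Apply Lemma~\ref{lemma2} to the active node $v$ to obtain a color $p^*$ with $A^* := \{w \in B(v, \eps') : p_w = p^*\}$ satisfying $\sum_{w \in A^*} p_w \ge C_2$. During the second part each active $w$ transmits with probability $p_w/(c \eps \log n)$; let $\sigma^* := \sum_{w \in A^*} p_w/(c \eps \log n) \ge C_2/(c \eps \log n) = \Omega(1/\log n)$. For each candidate $v' \in A^*$, its unit ball $B(v',1)$ contains $B(v, \eps') \supseteq A^*$, and by Lemma~\ref{lemma1} summed over the $O(\log n)$ colors the total scaled transmission mass inside $B(v',1)$ is $O(C_1/(c\eps))$, at most $1/2$ once $c$ is sufficiently large. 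Define $E^*_{v'}$ to be the event that $v'$ transmits and no other station in $B(v',1)$ transmits; Fact~\ref{f:notransmit} gives
\[
\Pr(E^*_{v'}) \;\ge\; \frac{p_{v'}}{c \eps \log n} \cdot (1/4)^{O(C_1/(c\eps))}.
\]
Since the events $E^*_{v'}$ for $v' \in A^*$ are pairwise disjoint (they pin down different unique transmitters), summing yields
\[
\Pr\Big(\bigcup_{v' \in A^*} E^*_{v'}\Big) \;\ge\; \sigma^* \cdot (1/4)^{O(C_1/(c\eps))} \;=\; \Omega(1/\log n).
\]

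\textbf{Uniform interference bound.} Fix $v' \in A^*$ and condition on $E^*_{v'}$. Every other transmitter $w$ satisfies $\dist(w, v') > 1$, so for every $u \in B(v', 1-\eps')$ the triangle inequality yields $\dist(u, w) > \eps'$. Let $M := \max_{u \in B(v', 1-\eps')} I_u$; swapping max and sum gives $M \le \sum_{w \in T \setminus \{v'\}} P/(\dist(w,v') - (1-\eps'))^\alpha$. Decomposing the complement of $B(v',1)$ into unit-width annuli $B(v', k+1) \setminus B(v', k)$ for $k \ge 1$, and combining Lemma~\ref{lemma1} summed over colors (scaled mass $O(C_1/(c\eps))$ per unit ball) with the bounded growth property ($O(k^{\gamma-1})$ unit balls suffice to cover the $k$-th annulus) and the convergence of $\sum_{k \ge 1} k^{\gamma-1-\alpha}$ (because $\alpha > \gamma$), one obtains $E[M \mid E^*_{v'}] = O\!\bigl(C_1 P / (c\, \eps\, (\eps')^\alpha)\bigr)$. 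Choosing $c$ large enough in terms of $\eps$, $\alpha$, $\beta$, $C_1$ drives this below $\cN \alpha \eps'/2$, whereupon Markov's inequality and Fact~\ref{fact01} imply $\Pr(v' \text{ is heard at every } u \in B(v', 1 - \eps') \mid E^*_{v'}) \ge 1/2$.

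\textbf{Amplification and main difficulty.} Combining the two steps, in a single round some $v' \in A^*$ both satisfies $E^*_{v'}$ and has its transmission successful at every receiver point in $B(v', 1-\eps')$ with probability $\Omega(1/\log n)$. A standard Chernoff bound over the $\Theta(\log^2 n)$ rounds of the second part then shows that the desired successful transmission occurs whp. The main technical hurdle is the uniform reception over the continuum of receiver points $u$, which I handle by working with the maximum $M$ and exploiting the uniform lower bound $\dist(w,v') - (1-\eps') \ge \eps'$ on the minimum interferer-to-receiver distance; this reduces the analysis to the same bounded-density plus bounded-growth annular estimates already used elsewhere in the paper, with no need for a discretisation or net argument.
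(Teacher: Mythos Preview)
Your proof is correct and follows the same overall strategy as the paper: establish an $\Omega(1/\log n)$ per-round success probability by combining a ``unique nearby transmitter'' event with a ``low residual interference'' event, then amplify over the $\Theta(\log^2 n)$ rounds of the second part.

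The decomposition differs in one useful respect. The paper takes the larger ball $B(v,2)$ as the silence zone: event~(1) is that exactly one station in $B(v,\eps')$ transmits, event~(2) that no one in $B(v,2)\setminus B(v,\eps')$ transmits, and event~(3) that the interference from outside $B(v,2)$ is below $\cN\alpha\eps'$ everywhere in $B(v,1)$. These three events concern disjoint sets of stations, hence are independent, and every interferer relevant to~(3) is at distance at least~$1$ from every receiver, so the annular estimate has no $(\eps')^{-\alpha}$ blow-up. Your choice of the smaller silence zone $B(v',1)$ leaves interferers as close as $\eps'$ to receivers, so the first annulus contributes a factor $1/(\eps')^{\alpha}$ that must be absorbed into~$c$; this is fine since $c$ may depend on $\eps$, but it yields a worse constant. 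Your union over candidates $v'\in A^*$ with conditioning on $E^*_{v'}$ is likewise a valid alternative to the paper's direct application of Fact~\ref{f:sum} to all of $B(v,\eps')$.
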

\begin{proof}
The sufficient condition for occurrence of the event from the
proposition
in a given round is that the following three assertions hold:
\begin{enumerate}
\item[(1)] 
exactly one station transmits in
$B(v,\eps')$,
\item[(2)] no other station in
$B(v,2)$
transmits,
\item[(3)] the interference from outside of
$B(v,2)$
in any point of
$B(v,1)$
      is smaller, than
$\cI=\cN\alpha\eps'$
      (which allows hearing transmissions from the distance $1-\eps'$).
\end{enumerate}

\noindent Now, in a given round, we bound from below the probabilities of the events (1)--(3) by choosing sufficiently large $c$.
\begin{enumerate}
\item[(1)]
The probability, that exactly one node in
$B(v,\eps')$
transmits is bigger than
$\frac{C_2}{2c\eps \log n}$
whp.
This follows from Fact~\ref{f:sum}.
\item[(2)]
The probability, that 
no one transmits in
$B(v,2)\setminus B(v,\eps')$
is bigger than $3/4$ whp.

When restricting to stations $w$ of the same color, the inequality $\sum_{w\in B}p_w\leq C_1$ 
holds for any
unit ball $B$, by Lemma~\ref{lemma1}.
Thus, using the bounded growth property, we know that for each color $\sum_{w\in B(v,2)}p_w=O(1)$.
As there are at most $\log n$ colors, this sum over all colors fulfills 
\[\sum_{w\in B(v,2)}\frac{p_w}{c\eps\log n}=\log n \frac{O(1)}{c\eps\log n} = O(1).\]
Thus, if $c$ is sufficiently large, the average number of transmitters
is smaller than $1/4$, and by Markov bound no one transmits with probability $3/4$.
This holds whp.

\item[(3)]
The probability, that in some point of $B(v,1)$ 
the interference exceeds
$\cI=\cN\alpha\eps'$ is smaller than $1/4$ whp.

Once again, it is sufficient to choose $c$ large enough so that the 
expected maximum of interference in $B(v,1)$ from outside of $B(v,2)$ is smaller than $\cI/4=\cN\alpha\eps'/4$.
This follows from interference estimations similar as in Fact \ref{fact7}.
Then this maximum interference is at most $\cI$ with probability $3/4$ by the Markov bound.

\end{enumerate}

As the events (1)--(3) are independent, the probability that exactly one node in
$B(v,\eps')$
transmits in a given round and it is heard in range $1-\eps'$ is bigger than
\[C_2/(2c\eps \log n)\cdot\frac34\cdot\left(1-\frac14\right)>C_2/(4c\eps \log n)\ ;
\]
For further references, we state it as a separate fact.

\begin{fact}\labell{f:one:roundc}
There exists a constant $c$ such that if each node $v$ is transmitting with probability
$\frac{p_v}{c\eps\log n}$ in a round,
the probability that exactly one node in $B(v,\eps')$  
transmits and it is heard in range $1-\eps'$ is at least 
$p=C_2/(4c\eps \log n)$.
\end{fact}

Now, we take into account that the second part of a phase lasts for many subsequent rounds.
Let the number of rounds be $T=(a\ln n)/p=O(\log^2 n)$, where
$a$ is an arbitrary constant.
The~probability that not all stations in range $1-\eps$ from a given active node $v$
get the message during the second part is at most
\[(1-p)^{a\ln n/p}<e^{-a\ln n}=n^{-a}
\ .
\]
Hence, all neighbours of $v$ in $G$ get the source message whp during part two of the phase.
\end{proof}

\subsection{Broadcast with spontaneous wakeup}


\paragraph{Algorithm SBroadcast}
Now, we consider the model with spontaneous wake-up.
We present an algorithm SBroadcast, which starts from
a {\em single} execution of StabilizeProbability on all stations from
the network for $\eps''=\eps/3$.
%
The color assignment made during StabilizeProbability 
might be viewed as a kind of
a communication backbone.\footnote{%
In Yu et al.\ \cite{YuHWTL12} the backbone is obtained as a connected dominating set
of the communication graph, by applying techniques from geometric radio networks.
However, this result holds only for a restricted family of networks and the analysis
requires that stations cannot receive signals from distance larger than $1-\eps$,
i.e., their model assumes weak devices, and it is known that the power of that model is different
than considered in our work (c.f.,~\cite{JKS13i})}
When coloring is done, the source node transmits the message deterministically.
Then,
after receiving the source message, every node transmits it 
with probability $\frac{p_v}{c\eps\log n}$ in each of 
next $O(\log^2 n)$ consecutive rounds. (The values of $c$ and $p$ are the same as in Fact~\ref{f:one:roundc}.)
This assures that in each round the broadcast message is propagated by one edge, say $\{v,u\}$, of the communication
graph ie. $B(v,\eps'')$ informs all nodes of $B(u,\eps'')$ with probability $p$. 
Using standard concentration bounds
for sums of independent random variables, one can assure that the source message
is delivered to each station in the graph-distance at most $D$ from the source
in $O(D\log n+\log^2n)$ rounds whp.

\begin{theorem}\labell{th:br:spont}
The SBroadcast algorithm solves the broadcast problem in the
spontaneous
wake-up model
in $O(D\log n+\log^2n)$ rounds whp.
\end{theorem}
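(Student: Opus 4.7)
The proof decomposes the execution of SBroadcast into the initial coloring phase and the subsequent propagation phase. For the first part, Fact~\ref{f:SP-time} immediately gives that StabilizeProbability terminates within $O(\log^2 n)$ rounds whp, and at that moment the resulting color assignment satisfies Lemmas~\ref{lemma1} and~\ref{lemma2} (instantiated with $\eps''=\eps/3$ in place of $\eps'$). The remaining task is to show that, once the source starts transmitting, the message reaches every station at graph-distance $k\le D$ within $O(k\log n+\log^2 n)$ further rounds whp.

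The plan is to analyze propagation at the level of $\eps''$-balls rather than individual nodes. Whenever a transmitter $w$ is heard at a point $u$ in distance $1-\eps''$, the analogue of Fact~\ref{f:onbehalf} with parameter $\eps''$ guarantees that every station in $B(u,\eps'')$ receives the message in the same round. Thus, if the entire ball $B(v,\eps'')$ is already informed and $u$ is a neighbour of $v$ in the communication graph, then for any $w\in B(v,\eps'')$ and $x\in B(u,\eps'')$ the triangle inequality gives $\dist(w,x)\le \eps''+(1-3\eps'')+\eps''=1-\eps''$, so a single successful transmission by any such $w$ heard in range $1-\eps''$ informs the entire ball $B(u,\eps'')$ at once. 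Fact~\ref{f:one:roundc}, whose hypotheses are ensured by the coloring (Lemma~\ref{lemma2} guarantees a color with constant probability mass in $B(v,\eps'')$, and Lemma~\ref{lemma1} together with the bounded growth property controls the interference from the rest of the network, exactly as in the proof of Proposition~\ref{lmain}), then guarantees that such a successful transmission occurs in each round with probability at least $p=\Omega(1/\log n)$.

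With this per-hop claim in hand, the next step is a path-level analysis. Fix any target station and a shortest path $s=u_0,u_1,\dots,u_k$ in the communication graph with $k\le D$. Let $\tau_i$ denote the first round, counted from the end of the coloring phase, at which $B(u_i,\eps'')$ is entirely informed; then $\tau_0=O(1)$ (accounting for the source's deterministic transmission) and, conditional on the history up to time $\tau_{i-1}$, the increment $\tau_i-\tau_{i-1}$ is stochastically dominated by a Geometric$(p)$ random variable. Hence $\tau_k$ is stochastically dominated by a sum of $k$ i.i.d.\ Geometric$(p)$ variables.

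The main technical step is a concentration bound on this sum. Equivalently, viewing each round as an independent Bernoulli$(p)$ trial for advancing the current frontier hop, the desired event is that the number of successes in $T$ rounds is at least $k$; a standard Chernoff bound for $\mathrm{Bin}(T,p)$ shows that for $T=\Theta(k/p+\log n/p)=O(D\log n+\log^2 n)$ with a sufficiently large constant, $\Pr[\tau_k>T]\le n^{-a}$ for any constant $a$ we choose in advance. A union bound over the at most $n$ destination stations then shows that, whp, every station receives the source message within $O(D\log n+\log^2 n)$ rounds after the coloring phase, which together with Fact~\ref{f:SP-time} yields the theorem. The only subtle point is arguing that the per-round success probability $p$ is genuinely available at every hop regardless of which informed stations happen to lie in $B(v,\eps'')$; this is handled by the ball-level view above, since each fresh ball $B(u_i,\eps'')$ is informed in its entirety and therefore inherits the Lemma~\ref{lemma2} mass bound required by Fact~\ref{f:one:roundc}.
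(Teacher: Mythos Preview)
Your proposal is correct and follows essentially the same approach as the paper's own proof: ball-level propagation from $B(u_{i-1},\eps'')$ to $B(u_i,\eps'')$ along a shortest path, per-round success probability $p=\Omega(1/\log n)$ obtained exactly as in Proposition~\ref{lmain}/Fact~\ref{f:one:roundc}, and a Chernoff bound on the resulting sum of geometric waiting times giving $(2D+C\log n)/p=O(D\log n+\log^2 n)$. Your write-up is in fact more explicit than the paper's sketch---you spell out the triangle-inequality justification for the choice $\eps''=\eps/3$, the stochastic domination by geometrics, and the final union bound over destinations---but the underlying argument is the same.
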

%
%
%
\begin{proof}
If $B(v,\eps'')$ knowing the message informs any $B(u,\eps'')$ for $\{v,u\}\in G$ whp
which can be proved similarly to Proposition~\ref{lmain}.
Consider a shortest path from $s$ to some node $v$.
A sufficient condition for $v$ to receive a message from $s$ is
$D$ successful transmissions on this path.
In each transmission a subsequent node $v_i$ gets the message together with
its neighbourhood $B(v_i,\eps'')$. 
By Chernoff bound 
all these transmissions happen whp in time $t=(2D+C\log n)/p=O(D\log n+\log^2 n)$.
\end{proof}

\section{Application to Other Problems}
\label{s:applications}

In this section we outline how we can solve some distributed
network problems other than broadcast using algorithms developed in the main part of the paper.
We assume that all stations share a common global clock
(i.e.\ they all have a common counter value assigned to each round).
Each problem can be considered in adhoc setting or with
some preexisting assignment of colors $p_v$ fulfilling
the conditions
from 
Lemmas \ref{lemma1} and \ref{lemma2}.
In the latter case the coloring is used as a backbone.

\paragraph{Adhoc wake-up} 
We study the wake-up problem as considered, e.g.,\ in \cite{CGK07}.
Each node in the network either wakes up spontaneously or gets activated by receiving a wake-up signal from another node. 
All active nodes transmit the wake-up message according to a given protocol. 
The running time of the protocol is the number of steps counted from the first spontaneous wake-up until all nodes become activated. 
Wake-up times are decided by an adversary.

In adhoc setting wake-up can be done analogously to the broadcast in which each awake station 
assumes that it has already received the same wake-up message. 
To assure the synchronization between stations, each (spontaneously) awaken station begins an execution of the protocol in the first round whose
number is divisible by $T$, where $T=O(D\log^2 n)$ is the number of rounds of an execution
of the broadcast protocol.
(Recall that we assume global clock.)
All the stations get this message
whp after time $2T=O(D\log^2 n)$ from the first spontaneous wake-up in a network.

\paragraph{Wake-up with established coloring}

Now, we consider the wakeup problem in the setting, where all station have assigned colors (probabilities) $p_v$
satisfying Lemma~\ref{lemma1} and Lemma~\ref{lemma2}.
Our algorithm for this setting works in two {\em phases}.
In the first phase a new coloring $q_v$ satisfying Lemma~\ref{lemma1} and Lemma~\ref{lemma2}
is found for
stations woken up spontaneously, i.e., by the adversary.
For all other stations $v$ we set $q_v=0$.
Then the message is broadcasted using the color 
$p_v+q_v$
for each station $v$.
This procedure has running time $T=O(D\log n+\log^2 n)$.
To assure the synchronization between stations, a station (woken up spontaneously) begins the protocol's execution in the first round whose
number is divisible by $T$ (i.e., a station woken up spontaneously ignores this spontaneous wakeup event until the earliest round
number divisible by $T$ and starts participating in the $q_v$ coloring then, provided it has not
received a message from other stations up to this moment).

\paragraph{Consensus in adhoc setting}

We consider consensus problem defined as follows.
Each station $v$ has some message $m_v$.
In certain moments some stations wake-up spontaneously 
(i.e., stations are chosen and awaken by an adversary).
In the end of the protocol all stations should agree on the same, say lexicographically smallest, message.
We assume, that the set of
possible messages is $\{0,1,2,\ldots,x\}$.
The running time of the protocol is the number of steps counted from the first spontaneous wake-up until all nodes know that the protocol's execution is finished.

Our protocol makes agreement on the lexicographically smallest message using the following strategy.
At the beginning of the protocol, stations woken up spontaneously perform
wake-up in adhoc setting.
In the last execution of StabilizeProbabilities in this wakeup,
they establish some coloring $p_v$.
Then stations that have the first bit of the message equal to $0$
perform wake-up with established coloring $p_v$ (in a limited time, as given above for the
wake-up with established coloring problem)
as they were woken up spontaneously.
This wakeup is successful if and only if the smallest message
has its first bit 0, so after this wakeup is done all stations learn
the one-bit prefix of the lexicographically smallest message.
This procedure can be 
iterated
$\log x$ times for consecutive bits of messages' binary representations.
In the $i$-th iteration stations that have the $i$-bit prefix of the $m_v$
equal to the smallest $(i-1)$-bit prefix already known to everybody plus bit 0 appended,
initiate wakeup with established coloring.
This way all stations learn
the $i$-th bit of lexicographically smallest message.
Thus the consensus problem can be solved in time $O(D\log n\log x+\log^2 n\log x)$.

\paragraph{Leader election in adhoc setting}
We consider the leader election problem as a task of choosing one station
in the whole network as the leader, assuming all stations start a protocol
at the same moment.

At the beginning, all stations choose IDs from the set $\{1,\ldots,n^3\}$, independently at random,
which guarantees uniqueness of IDs whp.
Then, the stations perform the consensus protocol, as described above, on assigned 
IDs.
This gives a solution in $O(D\log^2 n+\log^3 n)$ rounds.

\bibliographystyle{abbrv}
\bibliography{references}

\end{document}